\def\beq{\begin{equation}}
\def\eeq{\end{equation}}
\def\beqa{\begin{eqnarray}}
\def\eeqa{\end{eqnarray}}
\def\beqan{\begin{eqnarray*}}
\def\eeqan{\end{eqnarray*}}
\newcommand{\citep}[1]{\cite{#1}}
\newcounter{newenumi}
\renewcommand{\tilde}{\widetilde}
\renewcommand{\hat}{\widehat}
\def\beq{\begin{equation}}
\def\eeq{\end{equation}}
\def\beqa{\begin{eqnarray}}
\def\eeqa{\end{eqnarray}}
\def\beqan{\begin{eqnarray*}}
\def\eeqan{\end{eqnarray*}}
\DeclareMathOperator{\diag}{Diag}
\newtheorem{theorem}{Theorem}
\newtheorem{lemma}{Lemma}
\theoremstyle{definition}
\def\Exp{\mathbb{E}}
\newcommand{\Nrf}{N_{\rm RF}}
\newcommand{\Pup}{P^{\rm u}}
\newcommand{\Pdl}{P^{\rm d}}
\newcommand{\SINRup}{{\rm SINR}^{\rm u}}
\newcommand{\SINRdl}{{\rm SINR}^{\rm d}}
\newcommand{\Ckkp}{\Cbf_{{kk'}}}
\newcommand{\Ckpk}{\Cbf_{{k'k}}}
\newcommand{\Ckk}{\Cbf_{{k}}}
\newcommand{\Ckpkp}{\Cbf_{{k'}}}
\newcommand{\hhk}{\hat{\hbf}_{{k}}}
\newcommand{\hk}{{\hbf}_{{k}}}
\newcommand{\hhkp}{\hat{\hbf}_{{k'}}}
\newcommand{\hhkpp}{\hat{\hbf}_{{k''}}}
\newcommand{\hdek}{{\hbf}^{{\rm d}}_{k}}
\newcommand{\zero}{\mathbf{0}}
\newcommand{\cbf}{\mathbf{c}}
\newcommand{\ebf}{\mathbf{e}}
\newcommand{\hbf}{\mathbf{h}}
\newcommand{\pbf}{\mathbf{p}}
\newcommand{\ubf}{\mathbf{u}}
\newcommand{\vbf}{\mathbf{v}}
\newcommand{\Tbf}{\mathbf{T}}
\newcommand{\xbf}{\mathbf{x}}
\newcommand{\ybf}{\mathbf{y}}
\newcommand{\zbf}{\mathbf{z}}
\newcommand{\Abf}{\mathbf{A}}
\newcommand{\Bbf}{\mathbf{B}}
\newcommand{\Cbf}{\mathbf{C}}
\newcommand{\Dbf}{\mathbf{D}}
\newcommand{\Gbf}{\mathbf{G}}
\newcommand{\Hbf}{\mathbf{H}}
\newcommand{\Ibf}{\mathbf{I}}
\newcommand{\Jbf}{\mathbf{J}}
\newcommand{\Rbf}{\mathbf{R}}
\newcommand{\Sbf}{\mathbf{S}}
\newcommand{\Ubf}{\mathbf{U}}
\newcommand{\Wbf}{\mathbf{W}}
\newcommand{\Ybf}{\mathbf{Y}}
\newcommand{\Zbf}{\mathbf{Z}}
\newcommand{\tran}{^{\text{\sf T}}}
\newcommand{\trans}{{\text{\sf T}}}
\newcommand{\herm}{^{\Htran}}
\newcommand{\herms}{\Htran}
\newcommand{\vect}[1]{\mathbf{#1}}
\def\diag{\mathrm{diag}}
\def\kron{\otimes}
\def\tr{\mathrm{tr}}
\def\Htran{\mbox{\tiny $\mathrm{H}$}}
\def\CN{\mathcal{CN}} 
\newtheorem{Theorem}{Theorem}
\newtheorem{Proposition}{Proposition}
\newtheorem{Definition}{Definition}
\begin{document}
\bstctlcite{IEEEexample:BSTcontrol}
\title{Cell-Free Massive MIMO with\\ Low-Complexity Hybrid Beamforming}

\author{Abbas Khalili$^1$, Alexei Ashikhmin$^2$, Hong Yang$^2$\\
$^1$NYU Tandon School of Engineering, $^2$Nokia Bell Labs\\
Emails: ako274@nyu.edu, \{alexei.ashikhmin, h.yang \}@nokia-bell-labs.com}

\maketitle
\begin{abstract}
Cell-Free Massive Multiple-input Multiple-output (mMIMO) consists of many access points (APs) in a coverage area that jointly serve the users. These systems can significantly reduce the interference among the users compared to conventional MIMO networks and so enable higher data rates and a larger coverage area. However, Cell-Free mMIMO systems face multiple practical challenges such as the high complexity and power consumption of the APs' analog front-ends. Motivated by prior works, we address these issues by considering a low complexity hybrid beamforming framework at the APs in which each AP has a limited number of RF-chains to reduce power consumption, and the analog combiner is designed only using the large-scale statistics of the channel to reduce the system's complexity. We provide closed-form expressions for the signal to interference and noise ratio (SINR) of both uplink and downlink data transmission with accurate random matrix approximations. Also, based on the existing literature, we provide a power optimization algorithm that maximizes the minimum SINR of the users for uplink scenario. Through several simulations, we investigate the accuracy of the derived random matrix approximations, trade-off between the $95\%$ outage data rate and the number of RF-chains, and the impact of power optimization. We observe that the derived approximations accurately follow the exact simulations and that in uplink scenario while using MMSE combiner, power optimization does not improve the performance much. 
\end{abstract}

\section{Introduction}

Access Points (APs) with coherent transmission can increase the received power without requiring additional transmit power and reduce the interference among the user's signals leading to increased Signal to Interference and Noise Ratio (SINR) \cite{shamai2001enhancing,bjornson2020scalable,ngo2017cell,nayebi2017precoding,bjornson2019making}. In the context of Massive Multiple-input and Multiple-output (mMIMO) systems, this coherent transmission is referred to as Cell-Free mMIMO, where a network of interconnected APs simultaneously serve the users over a designated area \cite{ngo2017cell,nayebi2017precoding} an illustration of such networks is shown in Fig.~\ref{fig:cfm}. 

Cell-Free mMIMO systems have attracted a lot of attention in the literature for potential deployment in the next generations of wireless networks which are envisioned to operate at high frequencies such as Millimeter Wave \cite{femenias2019cell, rangan2014millimeter}. Data transmission at high frequencies suffer from high path loss due to propagation characteristics of the channel at high frequencies. To mitigate the high path-loss, the transceivers need to use large antenna arrays which has lead to high power consumption at the transceivers. This is a major obstacle for practical implementation of these systems \cite{rangan2014millimeter}. To reduce the power consumption, it is suggested to use hybrid transceivers in which, the antennas are connected to a few RF-chains through a network of phase shifters and/or analog switches \cite{el2014spatially,gao2016energy}.   

\begin{figure}[t]
\centering
    \includegraphics[width=0.7\linewidth]{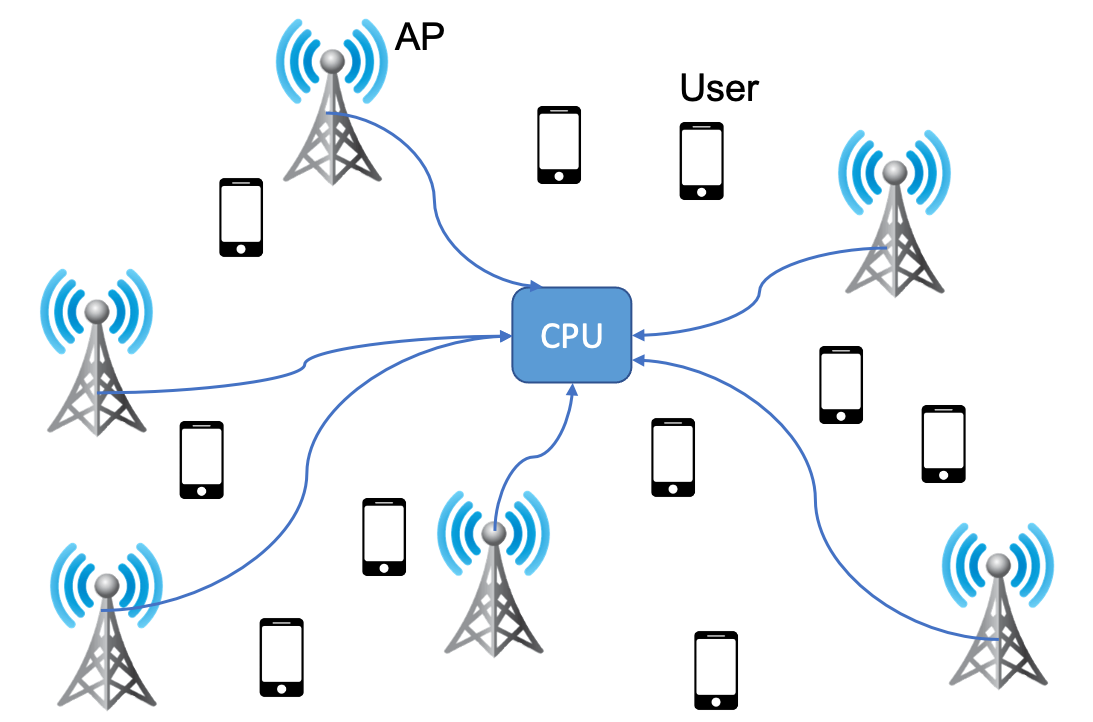}
    \caption{Cell free Massive MIMO system.}
    \label{fig:cfm}
\end{figure}

There is a large literature on the Cell-Free mMIMO. For example, \cite{yan2021scalable} considers Cell-Free mMIMO system in the context of internet of things where each AP only has one antennas. They provide the uplink and downlink per user equipment SINR expressions considering Minimum Mean Square Error (MMSE) combining and maximum ratio percoding, respectively. Furthermore, they derive accurate random matrix based approximations of the uplink SINR expression which only depend on the large scale statistics of the system along with power optimization methods for uplink and downlink scenarios to maximize the minimum SINR among the users. 
The paper \cite{femenias2019cell} considers a Cell-Free mMIMO systems in the context of mmWave systems where the APs use hybrid APs and provides hybrid beamforming method based on large scale fading coefficients of the channel along with SINR expressions for uplink and donwlink scenarios considering zero-forcing precoding and combiner, respectively. 

Here, we consider a Cell-Free mMIMO system with hybrid APs. At each AP we devise a hybrid beamfoming scheme in which the analog beamformer is designed based on the large scale statistics of the channel, and then after estimating the resulting effective channels between the APs and users, the digital beamfomer is designed based on the estimated channel using MMSE combiner for uplink and RZF precoder for downlink. We provide per user SINR expressions for both uplink and downlink scenarios along with accurate random matrix approximations which only depend on large scale statistics of the system (Sec.~\ref{sec:up} and Sec.~\ref{sec:dl}). Furthermore, we provide optimality achieving power optimization method for the uplink scenario that maximizes the minimum SINR among the users (Sec.~\ref{sec:up}). Finally, we provide various simulations of practical systems to validate the accuracy of the derived approximations and investigate the impact of the number of RF-chains, the considered hybrid beamforming method, and power optimization on the system's perfomance (Sec.~\ref{sec:sim}).

\section{System Model}
\subsection{Cell-Free Network Model}
\label{subsec:network-model}

We consider a Cell-Free mMIMO network with $M$ APs, each equipped with $N$ antennas, where all APs are connected via fronthaul connections to a Central Processing Unit (CPU). There are $K$ single-antenna users in the network and the channel between AP $m$ and user $k$ is denoted by $\hbf_{mk} \in \mathbb{C}^N$. We use block fading channel model and assume that $\hbf_{mk}$ is constant over time-frequency blocks of $\tau_c$ channel uses and in each block, $\hbf_{mk}$ can be modeled as an independent realization of a correlated Rayleigh fading distribution:
\begin{equation}
\label{eq:chg}
\hbf_{mk} \sim \CN(\vect{0}, \Rbf_{mk})
\end{equation}
where $\Rbf_{mk} \in \mathbb{C}^{N \times N}$ is the spatial correlation matrix which we assume is known at the APs. 

As shown in Fig.~\ref{fig:hbf}, we assume that each AP performs hybrid beamforming using $\Nrf$ RF-chains ($\Nrf\leq N$). To reduce the system complexity, and communication overhead, we design the analog precoder network at each AP (matrices $\Wbf_m, ~m \in[M]$) using only the spatial correlation matrices available at the APs \footnote{We use the notation $[N]$ to denote the set $\{1,2,\ldots,N\}$}. After designing the analog precoder, we will have an effective channel of dimension $\Nrf$ between each user and AP. We consider a time division duplexing system, where each time-frequency fading block consists of $\tau_p $,  $\dfrac{\tau_c-\tau_p}{2}$, and $\dfrac{\tau_c-\tau_p}{2}$ channel uses dedicated for channel estimation of the effective channel, uplink transmission, and downlink transmission, respectively. Analog precoder design and channel estimation are discussed below and uplink and downlink data transmission along with our results are discussed in Sec.~\ref{sec:up} and Sec.~\ref{sec:dl}, respectively.

\subsection{Analog Precoder Design Using Large Spatial Correlation}
\label{subsec:ABF}

 To the best of our knowledge, there are only two schemes on design of the analog precoder based on the spatial correlation matrices \cite{park2017exploiting,zhu2016novel}. The analog beamforming method proposed in \cite{zhu2016novel} aims to maximize the sumrate of the users, but here we are interested in max-min fairness and want to maximize the minimum rate of the users. Therefore, similar to \cite{mai2018two}, we use eigen beamforming method \cite{park2017exploiting}. 
 
In the eigen beamforming method, each RF-chain is allocated to a user and its corresponding analog combiner vector is derived as follows. 
Assume RF-chain $i$ at AP $m$ is allocated to the user $k$ and let us denote its analog precoder column with $\vect{w}_{m,i}$. Using the eigen value decomposition of $\Rbf_{mk} =\Ubf^{\herms}_{mk}\Lambda_{mk} \Ubf_{mk}$, we have
\begin{align}
    \label{eq:wg}
    \vect{w}_{m,i} = \frac{1}{N}e^{j\angle {\overline{\ubf}}_m},
\end{align}
where ${\overline{\ubf}}_{m}$ is the eigen vector corresponding to the largest eigne value of $\Rbf_{mk}$ and $\angle \cdot$ is an element-wise operator returning the angle of each element of its input. If multiple RF-chains are assigned to a user, we use the eigne vectors corresponding to that many largest eigen values, each for an RF-chain.

As mentioned above, using the eigen beamforming method requires allocating each RF-chain to a user. We have a total of $M \Nrf$ RF-chains that need to be allocated among the users. However,  finding the optimal user allocation is not a tractable problem. So, we use the heuristic algorithm proposed in \cite{femenias2019cell} to assign the RF-chains. This method aims to maximize the minimum sum of the average energy of the effective channels (channel after analog precoding) to the users. Since each RF-chain is allocated to a user, we assume $K \leq M \Nrf$ so each user is allocated at least one RF-chain.

\begin{figure}[t]
\centering
    \includegraphics[width=0.6\linewidth]{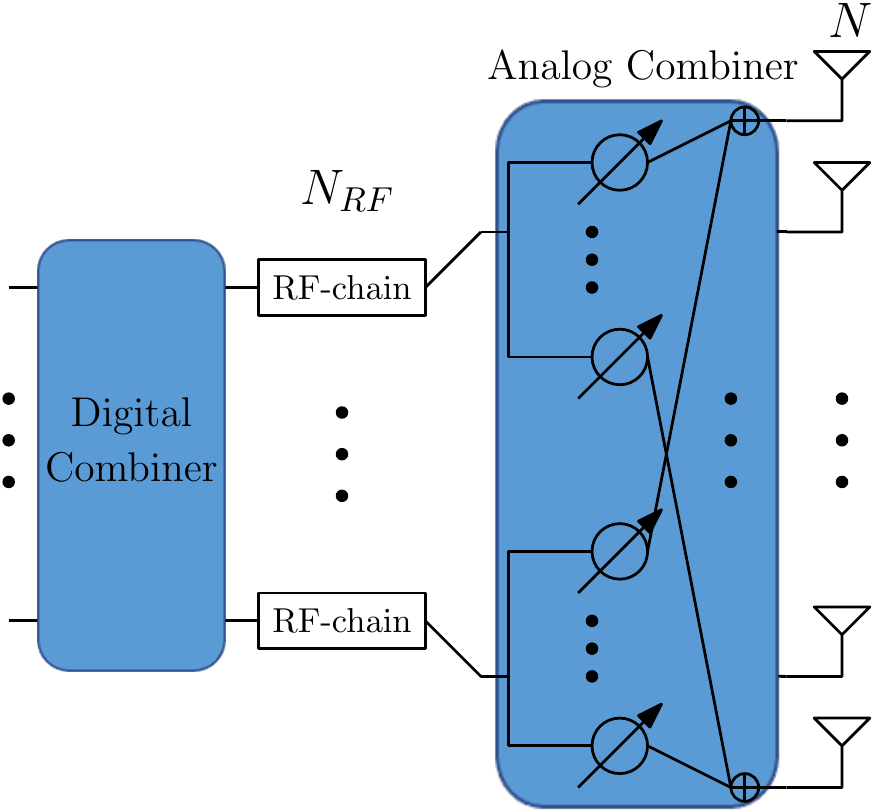}
    \caption{Hybrid Beamforming at each AP.}
    \label{fig:hbf}
\end{figure}

\subsection{Effective Channel Estimation}
\label{subsec:eche}

For channel estimation, we assume that user $k$ is assigned a unique random pilot ${\psi}_k \in \mathbb{C}^{\tau}$ with $\|\psi_k\|^2 = \tau_p$ which is transmitted through $\tau_p$ channel-uses. At AP $m$, the received signal is linearly combined into $\Nrf$ streams using the analog precoder $\Wbf_{m}$ and then forwarded to the CPU for channel estimation.
Let us denote the received vector at AP $m$ and channel-use $i$ by  $\ybf_{mi}\in \mathbb{C}^{\Nrf}$, we have
\begin{align}
\label{eq:rAPm}
    \Ybf_m = \sqrt{P_p} \Wbf_m^{\herms}\Hbf_m \vect{\Psi}\tran + \Wbf_m^{\herms}\Zbf_m,
\end{align}
where $P_p$ is the transmit power of the pilots, $\Ybf_m = [\ybf_{m1},\ybf_{m2}, \ldots, \ybf_{m\tau_p}]$, $\Hbf_m = [\hbf_{m1},\hbf_{m2}, \ldots, \hbf_{mK}]$, $\Psi = [\psi_1,\psi_2, \ldots, \psi_K]$, and $\Zbf_m = [\zbf_{m1},\zbf_{m2}, \ldots, \zbf_{m\tau_p}]$ is the noise matrix whose elements are i.i.d. $\CN(0,1)$. Following basic arithmetic, we can rewrite \eqref{eq:rAPm} as follows 
\begin{align}
    \ybf_{{\rm e}m} = \sqrt{P_p} \left(\vect{\Psi} \kron \Ibf_{\Nrf} \right)\hbf_{{\rm e}m} + \zbf_{{\rm e}m},
\end{align}
where $\ybf_{{\rm e}m} = [\ybf^{\trans}_{m1},\ybf^{\trans}_{m2},\ldots, \ybf^{\trans}_{m\tau_p}]\tran \in \mathbb{C}^{\tau_p\Nrf}$, $\hbf_{{\rm e}m} = $ $[(\Wbf^{\herms}_m\hbf_{m1})\tran,$ $(\Wbf^{\herms}_m\hbf_{m2})\tran,$ $\ldots,$  $(\Wbf^{\herms}_m\hbf_{mK})\tran]\tran \in \mathbb{C}^{K\Nrf}$, and $\zbf_{{\rm e}m} = [(\Wbf^{\herms}_m\zbf_{m1})\tran,(\Wbf^{\herms}_m\zbf_{m2})\tran,\ldots, (\Wbf^{\herms}_m\zbf_{m\tau_p})\tran]\tran$. Let us denote $  \vect{\Psi}_{\rm e} = \vect{\Psi} \kron \Ibf_{\Nrf} $. Using LMMSE channel estimation which minimizes the MSE distortion, we can estimate the vector $\hbf_{{\rm e}m}$ as
\begin{align}
\label{eq:chev}
    \hat{\hbf}_{{\rm e}m} = \sqrt{P_p} {\Rbf}_{{\rm e}m} \vect{\Psi}_{\rm e}\herm \left(\vect{\Psi}_{\rm e}{\Rbf}_{{\rm e}m} \vect{\Psi}_{\rm e}\herm + \Cbf_{{\rm e}zm} \right)^{-1} \ybf_{{\rm e}m},
\end{align}
where 
\begin{subequations}
\begin{align}
&\Rbf_{{\rm e}m} = \Exp \left[ \hbf_{{\rm e}m} \hbf^{\herms}_{{\rm e}m} \right] = \diag\{\Rbf_{{\rm e}mk}\}_{k\in [K]},\\
&{\Rbf}_{{\rm e}mk}  = \Wbf^{\herms}_m\Rbf_{mk} \Wbf_{m},\\
&\Cbf_{{\rm e}zm} = \Exp \left[ \zbf_{{\rm e}m} \zbf^{\herms}_{{\rm e}m} \right] =  \Ibf_{K} \kron \Wbf^{\herms}_m\Wbf_{m},
\end{align}
\end{subequations}
where we use the notation $\diag\{\Abf_i\}_{i\in[N]}$ to denote a block diagonal matrix whose $i^{\rm th}$ matrix on the diagonal is $\Abf_i$. Based on \eqref{eq:chev}, the estimated channel between user $k$ and AP $m$ is 
\begin{align}
    \label{eq:chmkh}
    \begin{aligned}
    &\hat{\hbf}_{{\rm e}mk}\!=\sqrt{P_p} {\Rbf}_{{\rm e}mk}\left( \vect{\psi}_k \herm \kron \Ibf_{N_{RF}} \right)\\
    & \left(\vect{\Psi}_{\rm e}\Rbf_{{\rm e}m} \vect{\Psi}_{\rm e}\herm + \Cbf_{{\rm e}zm} \right)^{-1} \!\!\ybf_{{\rm e}m}.  
    \end{aligned}
\end{align}
For our analysis, we are interested in the cross-covariance matrix of the estimated effective channels between CPU and users $k$ and $k'$ which is provided in the following proposition. 

\begin{Proposition}(\textbf{Covariance Matrix of Estimated Channel})
Consider the effective channel between user $k$ and CPU $\hbf_{{\rm e}k} = [\hbf_{{\rm e}1k}^{\trans},\hbf_{{\rm e}2k}^{\trans}, \ldots, \hbf_{{\rm e}Mk}^{\trans} ]\tran$, for users $k$ and $k'$, we have
\begin{align}
&\Cbf_{{\rm e}kk'} =\diag\{\Cbf_{{\rm e}mkk'} \}_{m \in [M]},\\
&\begin{aligned}
    \Cbf_{{\rm e}mkk'} &= P_p {\Rbf}_{{\rm e}mk} \left( \vect{\psi}_k\herm \kron \Ibf_{N_{RF}} \right)\times \\
    &\left(\vect{\Psi}_{\rm e}\Rbf_{{\rm e}m} \vect{\Psi}_{\rm e}\herm + \Cbf_{{\rm e}zm} \right)^{-1} \left( \vect{\psi}_{k'} \kron \Ibf_{N_{RF}} \right){\Rbf}_{{\rm e}mk'}
    \end{aligned}
\end{align}
\end{Proposition}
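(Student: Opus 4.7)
The plan is to establish the two pieces of the proposition separately: first the block-diagonal structure across APs, then the closed form of each $M$-th diagonal block. By definition, $\Cbf_{{\rm e}kk'} = \Exp[\hat{\hbf}_{{\rm e}k}\hat{\hbf}_{{\rm e}k'}\herm]$, so once I stack the per-AP estimates, both claims reduce to evaluating cross-moments that are second-order linear functionals of $\ybf_{{\rm e}m}$.

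For the block-diagonal structure, I would note that the channels $\hbf_{mk}$ in \eqref{eq:chg} are modelled as independent across APs $m$ (independent Rayleigh realizations) and that the pilot noise $\Zbf_m$ in \eqref{eq:rAPm} is independent across $m$ as well. Since the post-analog-combining observation $\ybf_{{\rm e}m}$ is a function only of $\{\hbf_{mk}\}_{k\in[K]}$ and $\Zbf_m$, the vectors $\ybf_{{\rm e}1},\ldots,\ybf_{{\rm e}M}$ are mutually independent, zero-mean. Each per-AP estimate $\hat{\hbf}_{{\rm e}mk}$ in \eqref{eq:chmkh} is a deterministic linear function of $\ybf_{{\rm e}m}$ alone, hence $\Exp[\hat{\hbf}_{{\rm e}mk}\hat{\hbf}_{{\rm e}m'k'}\herm] = \mathbf{0}$ whenever $m \neq m'$. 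Stacking over $m$ gives the announced block-diagonal form with $\Cbf_{{\rm e}mkk'} = \Exp[\hat{\hbf}_{{\rm e}mk}\hat{\hbf}_{{\rm e}mk'}\herm]$.

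For the per-block formula, write $\hat{\hbf}_{{\rm e}mk} = \Abf_k\,\ybf_{{\rm e}m}$ with
\[
\Abf_k = \sqrt{P_p}\,{\Rbf}_{{\rm e}mk}(\vect{\psi}_k\herm \kron \Ibf_{\Nrf})\bigl(\vect{\Psi}_{\rm e}\Rbf_{{\rm e}m}\vect{\Psi}_{\rm e}\herm + \Cbf_{{\rm e}zm}\bigr)^{-1},
\]
so that $\Cbf_{{\rm e}mkk'} = \Abf_k\,\Exp[\ybf_{{\rm e}m}\ybf_{{\rm e}m}\herm]\,\Abf_{k'}\herm$. From the Kronecker-rewritten observation model $\ybf_{{\rm e}m} = \sqrt{P_p}\,\vect{\Psi}_{\rm e}\hbf_{{\rm e}m} + \zbf_{{\rm e}m}$, independence of channel and noise, and the definitions of $\Rbf_{{\rm e}m}$ and $\Cbf_{{\rm e}zm}$, the observation covariance equals $P_p\vect{\Psi}_{\rm e}\Rbf_{{\rm e}m}\vect{\Psi}_{\rm e}\herm + \Cbf_{{\rm e}zm}$, which is exactly the matrix being inverted inside $\Abf_k$. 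The two inverses therefore annihilate one copy of it, and after using $\Rbf_{{\rm e}mk'}\herm = \Rbf_{{\rm e}mk'}$ (spatial correlation matrices are Hermitian) and $(\vect{\psi}_{k'}\herm \kron \Ibf_{\Nrf})\herm = \vect{\psi}_{k'} \kron \Ibf_{\Nrf}$ when transposing $\Abf_{k'}$, the remaining product collapses to the stated expression.

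The calculation is essentially bookkeeping; the only spot that requires care is the Kronecker-product Hermitian manipulation in $\Abf_{k'}\herm$, where one must correctly pair $(\vect{\psi}_{k'}\herm \kron \Ibf_{\Nrf})\herm$ with the Hermitian-symmetric inverse and with $\Rbf_{{\rm e}mk'}$ so that the surviving factors line up in the order $\Rbf_{{\rm e}mk}(\vect{\psi}_k\herm \kron \Ibf_{\Nrf})(\cdot)^{-1}(\vect{\psi}_{k'} \kron \Ibf_{\Nrf})\Rbf_{{\rm e}mk'}$. Once that ordering is verified, the proposition follows with no further computation.
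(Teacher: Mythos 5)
Your proof is correct and follows essentially the same route as the paper's: both write each per-AP estimate as a deterministic linear function of $\ybf_{{\rm e}m}$, use the independence of the observations across APs to kill the off-diagonal blocks, and cancel one copy of the observation covariance against the inverse appearing in the LMMSE filter. The only cosmetic difference is that the paper carries out the two steps in a single computation with a $\delta(m-m')$ factor, whereas you separate the block-diagonal argument from the per-block evaluation.
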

\begin{proof}
From \eqref{eq:chmkh}, for the cross-covariance matrix of the estimate effective channel between user $K$ and AP $m$ and user ${k'}$ and AP ${m'}$ is 
\begin{align}
\label{eq:expmkmkp}
    &\Exp\left[\hat{\hbf}_{{\rm e}mk}\hat{\hbf}^{\herms}_{{\rm e}mk'}\right] = \notag\\
    &P_p {\Rbf}_{{\rm e}mk} \left( \vect{\psi}_k\herm \kron \Ibf_{N_{RF}} \right) \left(\vect{\Psi}_{\rm e}\Rbf_{{\rm e}m} \vect{\Psi}_{\rm e}\herm + \Cbf_{{\rm e}zm} \right)^{-1}\Exp\left[ \ybf_{{\rm e}m}  \ybf_{{\rm e}m'}^{\herms}\right] \notag\\
    &\times
    \left(\vect{\Psi}_{\rm e}\Rbf_{{\rm e}m'} \vect{\Psi}_{\rm e}\herm + \Cbf_{{\rm e}zm} \right)^{-1} \left( \vect{\psi}_{k'} \kron \Ibf_{N_{RF}} \right){\Rbf}_{{\rm e}m'k'}\notag\\
    &=\delta(m-m') P_p {\Rbf}_{{\rm e}mk} \left( \vect{\psi}_k\herm \kron \Ibf_{N_{RF}} \right) \left(\vect{\Psi}_{\rm e}\Rbf_{{\rm e}m} \vect{\Psi}_{\rm e}\herm +\Cbf_{{\rm e}zm} \right)^{-1} 
    \notag\\
    &\times \left( \vect{\psi}_{k'} \kron \Ibf_{N_{RF}} \right){\Rbf}_{{\rm e}mk'}.
\end{align}
On the other hand, the cross-covariance matrix of the estimated effective channels between CPU and users $k$ and $k'$ is
\begin{align*}
    \Exp\left[\hat{\hbf}_{{\rm e}k}\hat{\hbf}^{\herms}_{{\rm e}k}\right] = \Exp\begin{bmatrix} \hat{\hbf}_{{\rm e}1k} \\ \vdots\\ \hat{\hbf}_{{\rm e}Mk} \end{bmatrix} \begin{bmatrix} \hat{\hbf}_{{\rm e}1k}^{\herms} , \ldots, \hat{\hbf}_{{\rm e}Mk}^{\herms} \end{bmatrix}.
\end{align*}
Using $\Exp\left[\hat{\hbf}_{{\rm e}mk}\hat{\hbf}^{\herms}_{{\rm e}mk'}\right]$ from \eqref{eq:expmkmkp} completes the proof.
\end{proof}

In this paper, we base all our analysis and proofs on the assumption that all of the matrices are full rank. However, based on the values of matrix $\Wbf_m$ the matrix $\vect{\Psi}_{\rm e}\Rbf_{{\rm e}m} \vect{\Psi}_{\rm e}\herm + \Cbf_{{\rm e}mz}$ might not be invertible, if so we use Moore-Penrose pseudo-inverse, instead of the inverse. Moore-Penrose pseudo-inverse is known to be optimal for LMMSE estimation \cite[Theorem 3.2.3]{kailath2000linear}. For the rest of the paper, unless necessary, we remove the sub-index ${\rm e}$ to avoid notation clutter.

\section{Uplink Transmission}
\label{sec:up}

After channel estimation of the effective channels, we have the uplink data transmission. In this section, we discuss SINR expression for uplink transmission along with random matrix approximations of the SINR expression based on large scale statistics of the system, and users' power optimization. 

\subsection{Achievable Rate SINR Expression}
Following \cite{yang2013capacity,ngo2017cell}, given user $k$ has uplink SINR of SINR$^u_k$, it can achieve the uplink rate of 
\begin{align}
    R_k^u = \frac{\tau_c- \tau_p}{2\tau_c} \log_2(1+{\rm SINR}^u_k).
\end{align}
To calculate ${\rm SINR}^u_k$, consider user $k$ received signal at AP $m$
\begin{align}
    \ybf_{mk} = \sum_{k}\sqrt{\Pup_{k}}\hk s_k + \zbf_m,
\end{align}
where $\Pup_k$ is the maximum average uplink transmit power for user $k$, $s_k$ is the transmit symbol of user $k$ with $\Exp\left[|s_k|^2\right] = 1$, $\zbf_m=  \Wbf_m^{\herms}\xbf,~\xbf\sim \CN\left(\zero, \Ibf_N\right)$ is the noise vector at AP $m$, and $\hbf_{mk}$ and $\Wbf_m$ are the effective channel after analog beamforming between AP $m$ and user $k$ and analog beamforming matrix at AP $m$, respectively.

To decode the transmit signals, we perfom centralized decoding and process all the received signals from the APs jointly at the CPU. Using the combining vector $\vbf_k$ to decode $s_k$, the received decoded signal for user $k$ can be written as 
\begin{align*}
\begin{aligned}
    &\hat{s}_k = \\
    &\vbf_k\herm\left(\sqrt{\Pup_k} \hhk s_k + \sum_{k'\neq k} \sqrt{\Pup_{k'}} \hhkp s_{k'}  + \sum_{k'} \sqrt{\Pup_{k'}} \tilde{\hbf}_{{k'}} s_{k'}  +  \zbf \right),
    \end{aligned}
\end{align*}
where $\tilde{\hbf}_{{k}} = \hk- \hhk$ is the estimation error of the user $k$ effective channel to the CPU and $\zbf= \left[\zbf_1^{\trans},\ldots \zbf_M^{\trans} \right]\tran$. As a result, the $k^{\rm th }$ user SINR becomes
\begin{align}
\label{eq:Dval}
    &\SINRup_k  = \frac{\Pup_k|\vbf_k \herm \hhk |^2}{\vbf_k \herm\left(\sum_{k'\neq k} \Pup_{k'} \hhkp\hhkp\herm  + \Dbf \right) \vbf_k},\\
    &\Dbf = \sum_{k} \Pup_k  \left(\Rbf_{k} - \Cbf_{{k}}\right) + \Cbf_{z},
\end{align}
where $ \Cbf_{z} = \diag\{\Wbf_m^{\herms}\Wbf_m\}_{m \in [M]}$. Following generalized Rayleigh quotient result \cite[Lemma B.10]{bjornson2017massive}, to maximize the $\SINRup_k$, we need to use the MMSE combiner which is
\begin{align}
\label{eq:SINRvec}
    &\vbf_k = \vect{\Omega}^{-1}\hhk, \quad 
    \vect{\Omega} = \sum_{k'} \Pup_{k'} \hhkp\hhkp\herm  + \Dbf,
\end{align}
and the maximum SINR for user $k$ becomes 
\begin{align}
\label{eq:upSINRexp} 
        &\SINRup_k  = \Pup_k \hhk\herm \vect{\Omega}_k^{-1}\hhk,\quad 
        \vect{\Omega}_k = \sum_{k'\neq k} \Pup_{k'} \hhkp\hhkp\herm  + \Dbf.
\end{align}

Next, we provide random matrix approximations of the SINR expression in \eqref{eq:upSINRexp}. 

\subsection{Random Matrix Approximation}
Finding an accurate approximation of SINR which is only based on large scale statistics of the system is of great importance as it can be used to evaluate the system's performance and optimize its  parameters without the knowledge of the channel realizations. Next theorem, provides two random matrix approximations for the SINR expression in \eqref{eq:upSINRexp} which only depend on large scale statistics.

\begin{Theorem}
\label{thm:rmup}
The SINR expression in \eqref{eq:upSINRexp} can be approximated as follows. 

First approximation:
\begin{align}
\label{eq:ur1}
\begin{aligned}
   \SINRup_k \approx &\Pup_k  \tr~\! \Ckk\Dbf^{-1}\\
   &-\Pup_k\sum_{k'\neq k} \frac{ \Pup_{k'} \tr~\!  \Ckpkp \Dbf^{-1} \Ckk \Dbf^{-1} }{1+ \Pup_{k'}\tr~\! \Ckpkp\Dbf^{-1}}.
   \end{aligned}
\end{align}

Second approximation:
\begin{align}
\label{eq:ur2}
    & \SINRup_k \approx \frac{\Pup_k}{M\Nrf}\tr~\!\Ckk \Tbf,
\end{align}
where 
\begin{subequations}
\label{eq:Tval}
\begin{align}
    &\Tbf =\left(\frac{\Dbf}{MN_{RF}} +\frac{1}{MN_{RF}} \sum_{k} \frac{\Pup_k}{1+e_k}\Ckk \right)^{-1},\\
    & \text{with}~e_k = \lim_{t\rightarrow \infty} e_k(t),~\text{for}~e_k(0) = 1,~\text{for all}~ k\\
    &\begin{aligned}
    &e_k(t+1) = \frac{\Pup_k}{MN_{Rf}}\times \\
    &\tr~\!\Ckk  \left(\frac{\Dbf}{MN_{RF}} +\frac{1}{MN_{RF}} \sum_{k} \frac{\Pup_k}{1+e_k(t)}\Ckk \right)^{-1},
    \end{aligned}
\end{align}
\end{subequations}
\end{Theorem}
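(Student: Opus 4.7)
The plan is to treat the two approximations separately, relying on standard random-matrix tools for Gaussian quadratic forms and on the deterministic-equivalent theory for separable-covariance sample covariance matrices. Throughout, the key structural fact I exploit is that, conditional on $\Wbf_m$ and the large-scale statistics, the estimates $\hhkp$ are independent zero-mean Gaussian vectors with known covariances $\Ckpkp$, so that $\vect{\Omega}_k$ in \eqref{eq:upSINRexp} has the form $\Dbf$ plus a sum of $K-1$ independent rank-one perturbations, while $\hhk$ is independent of $\vect{\Omega}_k$.

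For the first approximation, I would begin with the trace lemma for Gaussian quadratic forms (a standard consequence of the Hanson--Wright inequality): for $\hhk$ independent of $\vect{\Omega}_k$,
\begin{align}
\hhk\herm \vect{\Omega}_k^{-1}\hhk \;\approx\; \tr\!\left(\Ckk \vect{\Omega}_k^{-1}\right).
\end{align}
Next I would handle $\vect{\Omega}_k^{-1}$ by peeling off the rank-one terms. Applying the Sherman--Morrison identity once to remove $\Pup_{k'}\hhkp\hhkp\herm$ from $\vect{\Omega}_k$ and then applying the trace lemma again to the rank-one correction gives, term by term,
\begin{align}
\tr(\Ckk \vect{\Omega}_k^{-1}) \;\approx\; \tr(\Ckk \Dbf^{-1}) - \sum_{k'\neq k} \frac{\Pup_{k'}\tr(\Ckpkp \Dbf^{-1}\Ckk\Dbf^{-1})}{1+\Pup_{k'}\tr(\Ckpkp\Dbf^{-1})},
\end{align}
which is exactly \eqref{eq:ur1} after multiplying by $\Pup_k$. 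Strictly, I would first prove this with $\Dbf$ replaced by the matrix $\vect{\Omega}_k^{(k')}$ obtained by removing only the $k'$ term, and then argue that replacing $\vect{\Omega}_k^{(k')}$ by $\Dbf$ is valid in the large-$MN_{\rm RF}$ regime because each additional rank-one perturbation is $O(1/(MN_{\rm RF}))$ in trace norm.

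For the second approximation, I would invoke the deterministic-equivalent theorem for resolvents of the form $(\Dbf/(MN_{\rm RF}) + (1/(MN_{\rm RF}))\sum_{k'}\Pup_{k'}\hhkp\hhkp\herm)^{-1}$ with independent, separably-correlated Gaussian columns; the cleanest reference here is the Wagner--Couillet--Debbah / Hachem fixed-point theorem. The result is that there exists a unique positive solution $\{e_k\}$ to the system
\begin{align}
e_k \;=\; \frac{\Pup_k}{MN_{\rm RF}}\tr\!\left(\Ckk\Bigl(\tfrac{\Dbf}{MN_{\rm RF}}+\tfrac{1}{MN_{\rm RF}}\sum_{k'}\tfrac{\Pup_{k'}}{1+e_{k'}}\Ckpkp\Bigr)^{\!-1}\right),
\end{align}
and that, as $MN_{\rm RF}\to\infty$ with $K/(MN_{\rm RF})$ bounded, $(1/(MN_{\rm RF}))\tr(\Ckk\vect{\Omega}^{-1})$ converges almost surely to $(1/(MN_{\rm RF}))\tr(\Ckk\Tbf)$ with $\Tbf$ as in \eqref{eq:Tval}. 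I would then argue that replacing $\vect{\Omega}_k$ by $\vect{\Omega}$ in \eqref{eq:upSINRexp} incurs only a vanishing error (again by Sherman--Morrison on the removed term), apply the trace lemma, and finally plug in $\Pup_k \tr(\Ckk\vect{\Omega}^{-1})\approx \Pup_k \tr(\Ckk\Tbf)$ to obtain \eqref{eq:ur2}. Existence, uniqueness and convergence of the fixed-point iteration $e_k(t)\to e_k$ follow from the standard-interference-function argument of Yates on the map implicit in $e_k(t+1)$.

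The main obstacle is the second approximation: verifying the hypotheses of the deterministic-equivalent theorem (bounded spectral norm of $\Ckk$ uniformly in $MN_{\rm RF}$, a bounded aspect ratio $K/(MN_{\rm RF})$, and invertibility or a suitable pseudo-inverse extension of the various Gram-type matrices that appear through $\Cbf_{{\rm e}zm}$ and the pilot correlations) is delicate in the hybrid-beamforming setting because $\Ckk$ itself depends on the random-yet-fixed analog combiner $\Wbf_m$ and on the pilot-contamination structure encoded in $\Cbf_{{\rm e}mkk'}$. The first approximation, by contrast, is essentially a textbook calculation once the trace lemma and Sherman--Morrison are in hand; I would present it as a warm-up and then devote most of the proof to the second.
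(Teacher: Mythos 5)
Your proposal follows essentially the same route as the paper's proof: Gaussian trace lemmas plus Woodbury/Sherman--Morrison peeling for the first approximation, and the Wagner--Couillet--Debbah deterministic equivalent combined with a rank-one perturbation lemma and the trace lemma for the second. Two points need repair. First, your ``key structural fact'' --- that the estimates $\hhkp$ are independent across users, so that $\hhk$ is independent of $\vect{\Omega}_k$ --- is false in this model: with non-orthogonal pilots the LMMSE estimates are jointly Gaussian but correlated, with cross-covariance $\Ckkp\neq\zero$ (Proposition 1). The paper does not assume independence; it represents the pair $(\hhk,\hhkp)$ through the joint $2\times2$ block covariance, shows the cross quadratic form concentrates on $\tr\,\Ckkp\Dbf^{-1}$, and only then introduces the explicit approximation $\Ckkp\approx\zero$ (reasonable for orthogonal or long random pilots) to discard these terms and to legitimize applying the independent-column deterministic-equivalent theorem. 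Your argument needs that approximation stated, both in the first derivation and as a hypothesis check before invoking the fixed-point theorem in the second. Second, your justification for replacing $\vect{\Omega}_k^{(k')}$ by $\Dbf$ --- that each remaining rank-one perturbation costs only $O(1/(M\Nrf))$ --- does not hold cumulatively: there are $K-2$ such perturbations and their aggregate effect on the resolvent is $O(1)$ when $K$ is comparable to $M\Nrf$. This is precisely why \eqref{eq:ur1} is accurate only for $M\Nrf>K$, as the paper notes and the simulations confirm; you should either restrict to that regime or present the step, as the paper does, as a diagonal approximation of $\bigl(\Ibf_K+\hat{\Hbf}_{/k}\herm\Dbf^{-1}\hat{\Hbf}_{/k}\bigr)^{-1}$ rather than as a limit theorem. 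The remainder of your plan for the second approximation, including the existence/uniqueness remark for the fixed point, matches the paper's proof.
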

\begin{proof}
The proof is provided in Appendix~\ref{app:rmup}.
\end{proof}
As we will see in Sec.~\ref{sec:sim}, the first approximation is only accurate when the total number of RF-chains is greater than the number of users $M\Nrf > K$. But, the second approximation is accurate for any values of $K$ and $M\Nrf$. However, it is more computationally complex.

\subsection{Power Optimization}
\label{subsec:popup}
Next, we discuss the user power optimization for maximizing the minimum SINR among the users which corresponds to maximizing the worst rate of the system.
To optimization the transmit powers, we make use of the result in \cite{hong2014unified}. More specifically, in our next theorem, we show that the SINR expressions in \eqref{eq:upSINRexp} and 
\eqref{eq:ur2} follow the definition of the \textit{competitive utility functions} \cite[Assumption~1]{hong2014unified} and the power constraints are \textit{monotonic constraints} \cite[Assumption~2]{hong2014unified}. Therefore, we can repurpose \cite[Algorithm~1]{hong2014unified} as shown in Alg.~\ref{Alg:pOpt} to find optimal users' transmit powers. This algorithm is centralized in the sense that the CPU performs the optimization to find the optimal power allocation for the users and then sends the power values to the users.
\begin{Theorem}
\label{thm:pOptup}
Using Alg.~\ref{Alg:pOpt}, the transmit power of the users can be optimized to achieve optimal max-min SINR among the users for the SINR functions in \eqref{eq:upSINRexp} and 
\eqref{eq:ur2}.
\end{Theorem}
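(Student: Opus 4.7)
The plan is to directly verify the two structural assumptions from \cite{hong2014unified} needed by their Algorithm~1: that each user's SINR function is a competitive utility (strictly increasing in that user's own power, non-increasing in others' powers, and jointly continuous), and that the per-user box constraints $0\le\Pup_k\le\Pup_{k,\max}$ are monotonic and compact. Once both assumptions are in place, the convergence and optimality of Alg.~\ref{Alg:pOpt} follow from the corresponding result in \cite{hong2014unified} applied verbatim.

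For the exact SINR in \eqref{eq:upSINRexp}, I would first reorganize $\vect{\Omega}_k$ to isolate the dependence on $\Pup_k$. Since $\Dbf$ contains the term $\Pup_k(\Rbf_k-\Ckk)$, we may write $\vect{\Omega}_k = \Pup_k(\Rbf_k-\Ckk) + \mathbf{M}_k$, where $\mathbf{M}_k$ depends on $\{\Pup_{k'}\}_{k'\ne k}$ but not on $\Pup_k$. Then
\begin{equation*}
\SINRup_k = \hhk\herm\left((\Rbf_k-\Ckk) + \mathbf{M}_k/\Pup_k\right)^{-1}\hhk,
\end{equation*}
which is manifestly strictly increasing in $\Pup_k$ because $\mathbf{M}_k/\Pup_k$ shrinks in the L\"owner order as $\Pup_k$ grows. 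For any $k'\ne k$, raising $\Pup_{k'}$ only adds the positive semidefinite term $\Pup_{k'}\hhkp\hhkp\herm + \Pup_{k'}(\Rbf_{k'}-\Ckpkp)$ to $\vect{\Omega}_k$, so $\vect{\Omega}_k^{-1}$ decreases in the L\"owner order and the quadratic form $\SINRup_k$ falls. Joint continuity is immediate from smoothness of matrix inversion on the positive definite cone.

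For the approximation in \eqref{eq:ur2}, the argument is more delicate because $\Tbf$ depends on the full power vector only implicitly through the fixed-point system defining $\{e_k\}$. I would first show that the map producing $e_k(t+1)$ from $(\Pup_1,\ldots,\Pup_K,e_1(t),\ldots,e_K(t))$ is a standard interference function in Yates' sense, which yields uniqueness of the fixed point and monotone dependence of $\{e_k\}$ on the power vector $\Pup$. Combining this with the explicit formula for $\Tbf$, and using the trace--inverse identities already invoked in the derivation of \eqref{eq:Tval}, I would then conclude that the approximate SINR is strictly increasing in $\Pup_k$ and non-increasing in $\Pup_{k'}$ for $k'\ne k$. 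The box constraints $\Pup_k\in[0,\Pup_{k,\max}]$ trivially satisfy the monotonicity and compactness requirements of \cite[Assumption~2]{hong2014unified}.

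The main obstacle will be the monotonicity claim for the second approximation: the dependence on the powers passes through the implicit fixed-point $\{e_k\}$ rather than being explicit, so direct differentiation is awkward. The cleanest route is through Yates' standard interference framework, since once the fixed-point map is shown to be positive, monotone and scalable in its arguments, the resulting $\{e_k(\Pup)\}$ inherit the correct monotonicity, which then propagates through $\Tbf$ to the approximate SINR. The exact SINR case and the verification of the constraint properties are essentially bookkeeping once this is in place.
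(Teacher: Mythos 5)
There is a genuine gap: you have mis-stated the hypotheses of \cite[Assumption~1]{hong2014unified} and consequently never verify the one condition that is not routine. The cited assumption requires \emph{three} properties of the utility map: positivity, competitiveness (strictly increasing in $\Pup_k$, strictly decreasing in $\Pup_{k'}$ for $k'\neq k$), and \emph{directional monotonicity}, i.e. $\mathrm{u}(\lambda\pbf)>\mathrm{u}(\pbf)$ for all $\lambda>1$ and $\pbf>\zero$. You replace the third property with ``jointly continuous,'' which is not what the framework needs; directional monotonicity is precisely what prevents the normalized fixed-point iteration of Alg.~\ref{Alg:pOpt} from stalling on a scale-invariant utility, and without it the appeal to \cite[Algorithm~1]{hong2014unified} does not go through. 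The paper devotes an entire step (its ``Third condition'') to this, showing via the Woodbury identity that $\vect{\Omega}_k(\lambda\Pbup)/\lambda \prec \vect{\Omega}_k(\Pbup)$ and hence $\SINRup_k(\lambda\Pbup)>\SINRup_k(\Pbup)$. The good news is that your own decomposition closes this hole almost for free: since $\vect{\Omega}_k(\lambda\Pbup)/\lambda=\vect{\Omega}_k(\Pbup)-(1-\tfrac{1}{\lambda})\Cbf_{z}$ with $\Cbf_z\succ 0$, the L\"owner-order argument you use for competitiveness gives directional monotonicity in one line --- but you must state and verify it.

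On the parts you do address, your route is sound and in fact cleaner than the paper's: writing $\vect{\Omega}_k=\Pup_k(\Rbf_k-\Ckk)+\mathbf{M}_k$ and arguing monotonicity through the L\"owner order replaces the paper's explicit computation of $\partial\,\SINRup_k/\partial \Pup_k$ via the matrix $\Gbf_k$ and the identity for the derivative of an inverse, while reaching the same conclusions. For the approximation \eqref{eq:ur2}, your concern about the implicit dependence of $\Tbf$ on $\Pbup$ through the fixed point $\{e_k\}$ is legitimate, and the Yates standard-interference-function route you sketch is a reasonable (arguably more careful) way to establish monotonicity there; the paper itself only asserts that ``similar steps'' plus trace-positivity facts suffice, so neither argument is fully carried out for that case. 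But as written, the proposal cannot be accepted until the directional-monotonicity condition is restored and checked for both \eqref{eq:upSINRexp} and \eqref{eq:ur2}.
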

\begin{proof}
The proof is provided in Appendix~\ref{app:pOptup}.
\end{proof} 

\section{Downlink Transmission}
\label{sec:dl}

After channel estimation and uplink data transmission, we have the downlink transmission. In this section, we discuss SINR expression for downlink transmission along with random matrix approximations of the SINR expression based on large scale statistics of the system. 
\begin{algorithm2e}[t]
\caption{Uplink user power optimization}
\label{Alg:pOpt}
\textbf{Initialize:} $\vect{\Pup}(0)= [\Pup_1(0), \Pup_2(0), \ldots, \Pup_K(0)]\tran > 0$\;
\While{not converged}{
$\Pup_k(t+1) = \frac{\Pup_k(t)}{\SINRup_k\left(\vect{\Pup}(t)\right)}$\tcp*{\!Update power vector $\vect{\Pup}(t+1)$}
$\vect{\Pup}(t+1) = \frac{\vect{\Pup}(t+1)}{\max\{\vect{\Pup}(t+1)\}}$\tcp*{Scale power vector $\vect{\Pup}(t+1)$}
}
\textbf{Output:} $\vect{\Pup}(t+1)$
\end{algorithm2e}
\subsection{Achievable Rate and SINR Expression}

Following \cite{yang2013capacity,ngo2017cell}, given user $k$ has downlink SINR of SINR$^d_k$, it can achieve the downlink rate of 
\begin{align}
    R_k^d = \frac{\tau_c- \tau_p}{2\tau_c} \log_22(1+{\rm SINR}^d_k).
\end{align}

Next, we calculate the users' SINRs. Define $\Pdl_{mk}$ as the transmit power used at AP $m$ for downlink data transmission to user $k$ and $\vbf^d_{mk}$ as the encoder vector at AP $m$ used for user $k$. The received signal at user $k$ is
\begin{align}
    \begin{aligned}
        \ybf_k &= \sum_M \sqrt{\Pdl_{mk}} \Exp[\hbf^{\herms}_{mk} \vbf^d_{mk}]s_k \\
        &+ \sum_M \sqrt{\Pdl_{mk}}\left( \hbf^{\herms}_{mk} \vbf^d_{mk}- \Exp[\hbf^{\herms}_{mk} \vbf^d_{mk}]\right)s_k \\
        &+ \sum_M \sum_{k'\neq k} \sqrt{\Pdl_{mk'}} \hbf^{\herms}_{mk'} \vbf^d_{mk'} s_{k'} + \zbf_k,
    \end{aligned}
\end{align}
where $s_k$ is the transmit signal to user $k$ and $\zbf_k$ is the noise vector whose elements are i.i.d. Gaussian $\CN(0,1)$.

We assume that the CPU uses the RZF combiner which is
\begin{align}
\label{eq:dcomb}
    \hat{\vbf}_k^d = \vect{\Omega}_{\rm RZF}^{-1}\hhk, \quad 
    \vect{\Omega}_{\rm RZF} = \sum_{k'}  \hhkp\hhkp\herm  + \rho\Ibf,
\end{align}
where $\rho$ is the regularization factor and $\hat{\vbf}_k^d =  [\hat{\vbf}^d_{1k}, \ldots \hat{\vbf}^d_{MK}]$ with $\hat{\vbf}^d_{mk}$ denoting the combiner vector used at AP $m$. 
Note that we cannot use $\hat{\vbf}^d_k$ directly as the transmit power constraint at the APs is not satisfied. Also, if scaling of the combiner vector for each AP is different then the resulting combiner vector will not have the interference cancellation property of the RZF. Therefore, here, we scale $\hat{\vbf}_k^d $ such that the AP with maximum norm combiner vector $\hat{\vbf}_{mk}^d $ uses transmit power $P_k^d$ for user $k$. We have
\begin{align}
    \vbf^d_{k} = \frac{\hat{\vbf}^d_{k}}{\max_m \sqrt{ \Exp\{|\hat{\vbf}^d_{mk}|^2\}}}.
\end{align}
As a result, for $\SINRdl_k$, we have
\begin{align} 
\label{eq:dlSINRexp}
\begin{aligned}
    &\SINRdl_k =\\
    &\frac{ P^d_k|  \Exp \{\hbf_k\herm \vbf_k^d \} |^2  }{ \sum_{k' \neq  k}{P^d_{k'}\mathbb{E}}\left\{| \hbf_k\herm  \vbf_{k'}^d |^2\right\} +  P^d_k{\mathbb{V}}\left\{ \hbf_k\herm \vbf_k^d \right\} + 1},
    \end{aligned}
\end{align}

\begin{figure*}[t]
\centering
\begin{subfigure}{0.32\linewidth}
    \centering
    \includegraphics[width = 0.9\textwidth]{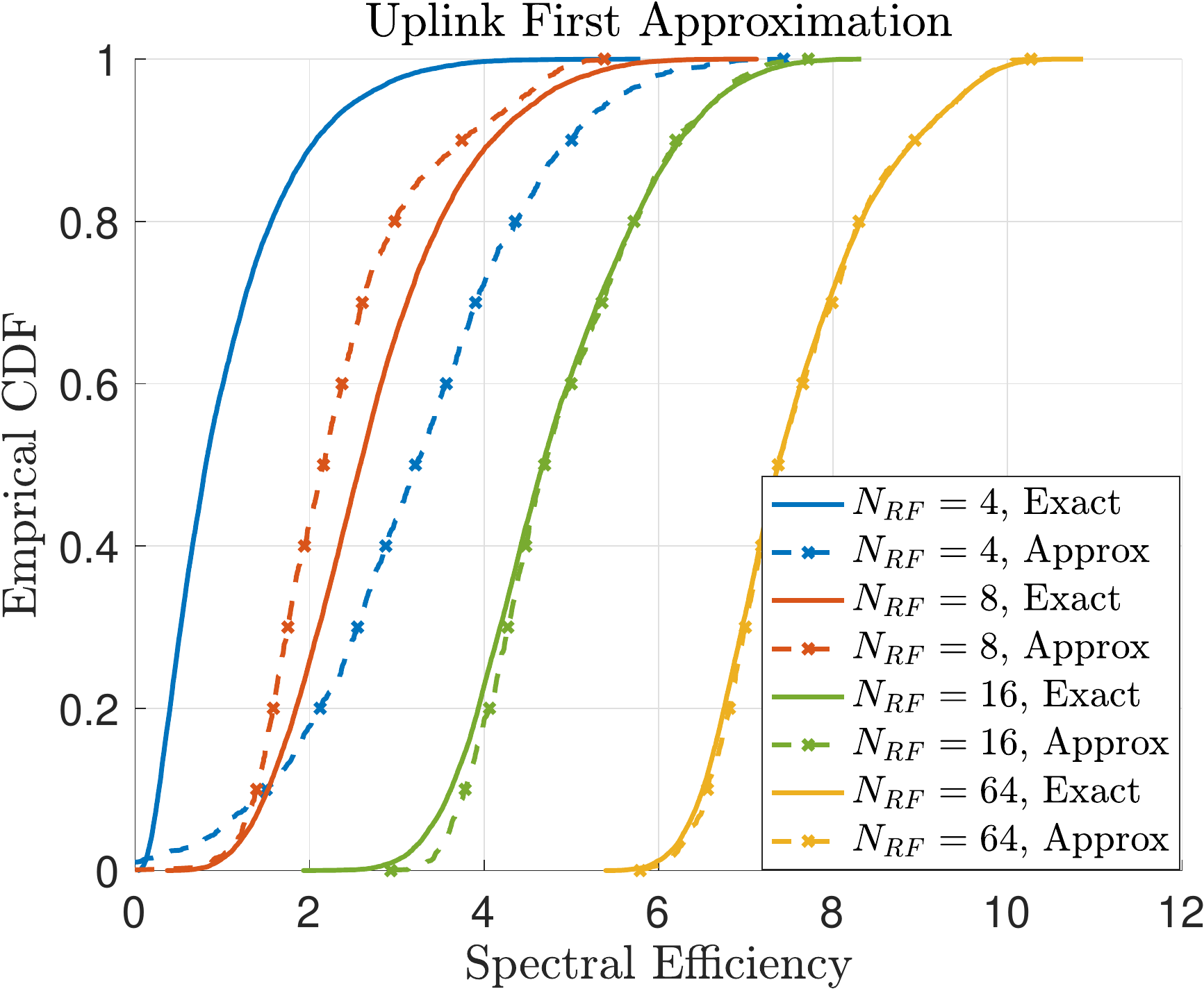}
    \caption{}
    \label{fig:upA11}
\end{subfigure}
\begin{subfigure}{0.32\linewidth}
    \includegraphics[width = 0.9\textwidth]{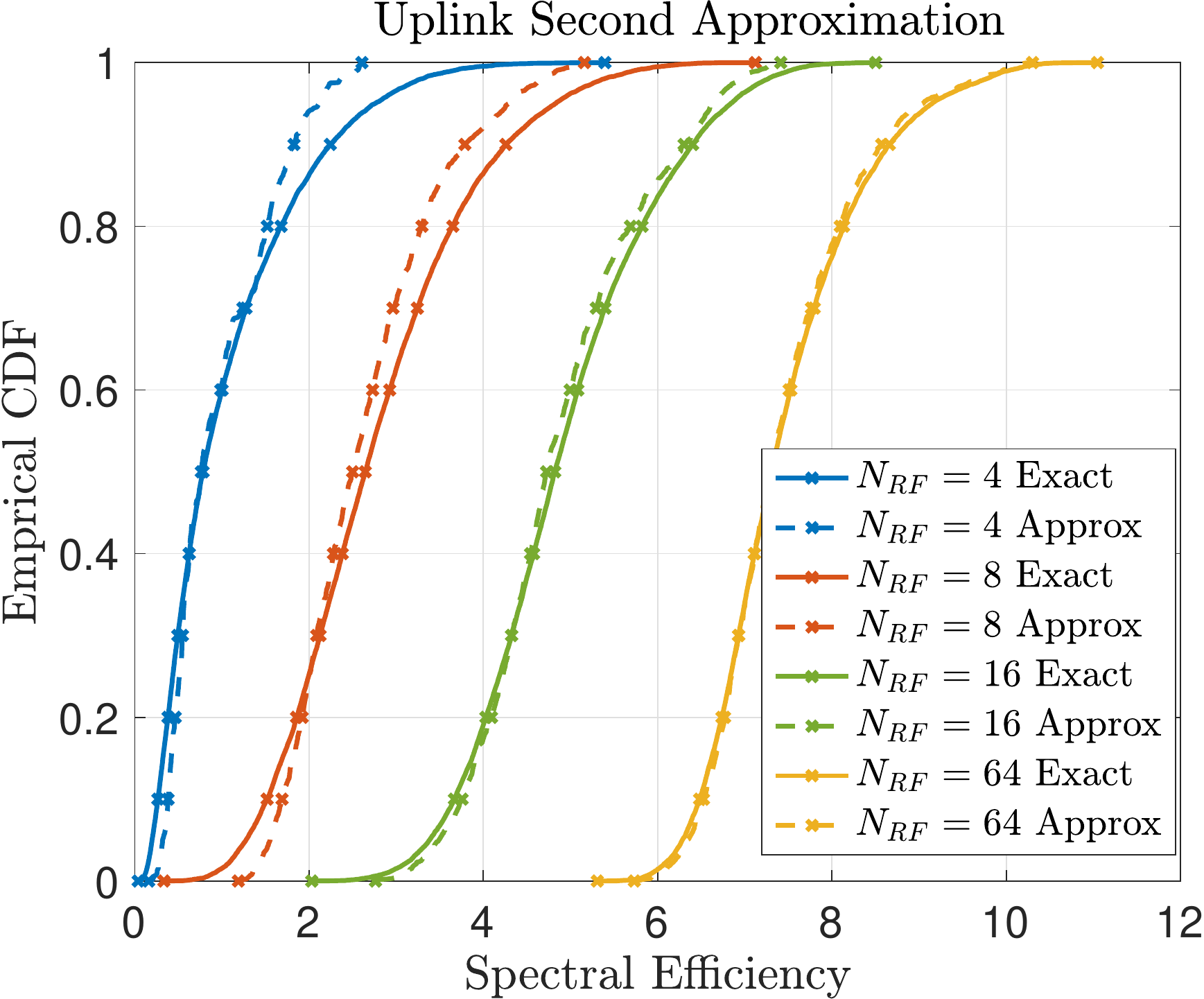}
    \caption{} 
    \label{fig:upA2}
\end{subfigure}
\begin{subfigure}{0.32\linewidth}
    \centering
    \includegraphics[width = 0.9\linewidth]{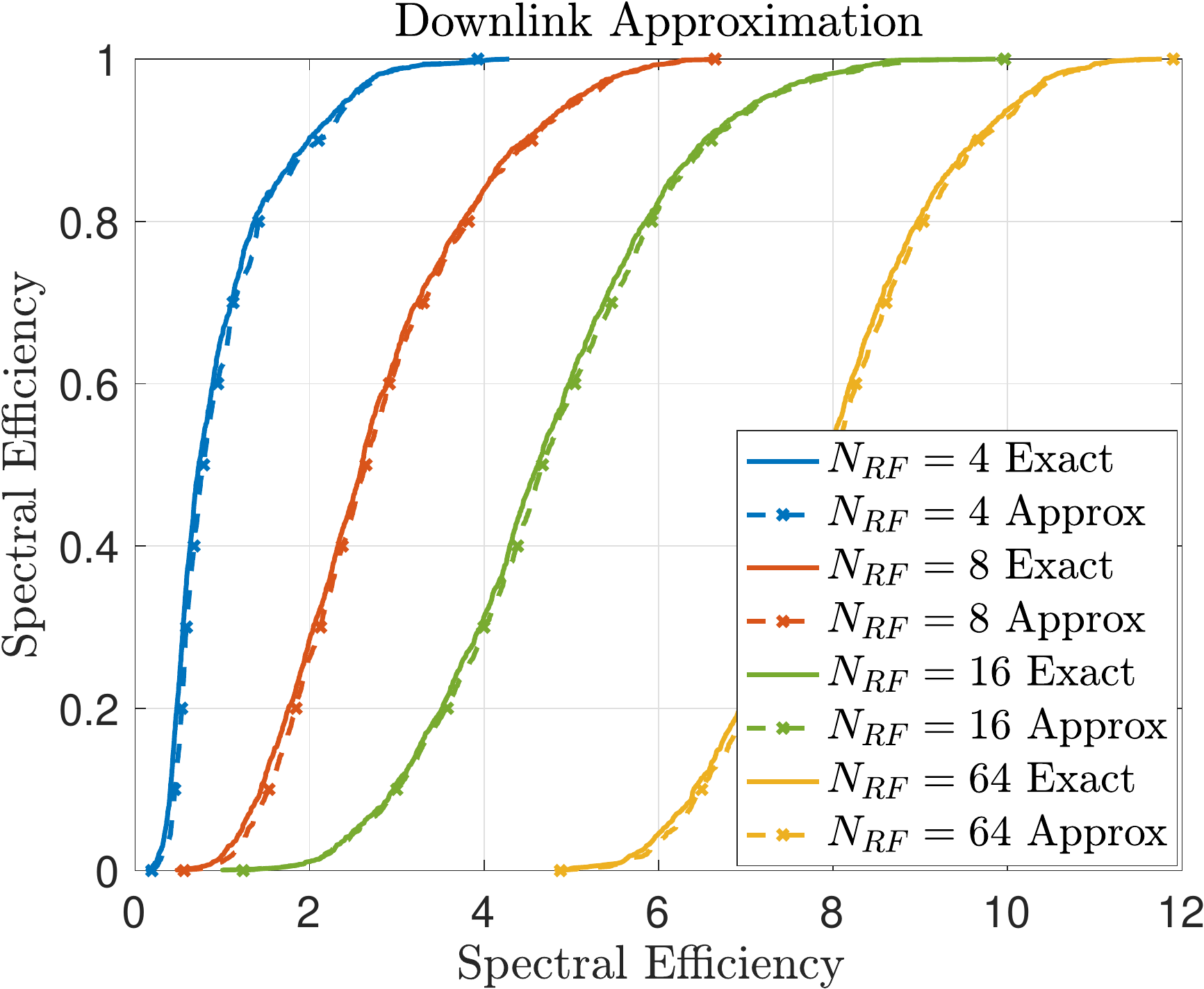}
    \caption{}
    \label{fig:dlA}
\end{subfigure}
\vspace*{-0.1cm}
\caption{Empirical CDF of spectral efficiency based on exact uplink SINR expression \eqref{eq:upSINRexp} and (a) approximation \eqref{eq:ur1}, (b)  approximation \eqref{eq:ur2}. (c) Empirical CDF of spectral efficiency based on exact downlink SINR expression \eqref{eq:dlSINRexp} and its approximation in \eqref{eq:apdlSINR}.}
\vspace*{-0.4cm}
\label{fig:upA1}
\end{figure*} 

\subsection{Random Matrix Approximation}
Next theorem, provides a random matrix approximations for $\SINRdl_k$ in \eqref{eq:dlSINRexp} which only depend on large scale statistics. 
\begin{Theorem}
\label{thm:rmdl}
The SINR expression in \eqref{eq:dlSINRexp} can be approximated as follows.
\begin{align}
    \label{eq:apdlSINR}
     \SINRdl_k\approx \frac{P^d_k\alpha_k}{\sum_{k'\neq k}P^d_{k'}\beta_{kk'}+1},
\end{align}
where 
\begin{align}
    &\alpha_k = \left(\frac{1}{\nu_k} \frac{\tr~\! \Ckk \Sbf }{M\Nrf+  \tr~\! \Ckk\Sbf}\right)^2 \\
    &\begin{aligned}
    &\beta_{kk'} =     \frac{1}{\nu_{k'}^2(M\Nrf + \tr~\! \Ckpkp \Sbf)^2}\times\Big[ \tr~\!\Rbf_k\Sbf'_{k'} \\
    &- 2{\rm Re}\left\{\frac{\tr~\!\Ckk \Sbf\, \tr~\! \Ckk\Tbf'_{k'}}{M\Nrf + \tr~\! \Ckk \Sbf}\right\}\\
    &+(M\Nrf + \tr~\! \Ckk \Sbf)^2(\tr~\! \Ckk \Sbf )^2 \,\tr~\! \Ckk \Sbf'_{k'}\Big ]
    \end{aligned}\\
     &\nu_k^2 = \frac{\max_{m\in[M]}\tr~\! \Ckk[m]\Sbf'[m]}{\left(M\Nrf + \tr~\! \Ckk\Sbf\right)^2},\quad\text{for all $k$}
\end{align}
where $\Ckk[m]$ and $\Sbf'[m]$ are the $m^{\rm th}$ matrices of size $\Nrf\times \Nrf$ located on the diagonal of $\Ckk$ and $\Sbf'$, respectively,
\begin{subequations}
\label{eq:sbf}
\begin{align}
    &\Sbf =\left(\frac{\rho\Ibf}{M\Nrf} +\frac{1}{M\Nrf} \sum_{k} \frac{1}{1+e_{k}}\Ckk \right)^{-1},\\
    & \text{with}~e_k = \lim_{t\rightarrow \infty} e_k(t),~\text{for}~e_k(0) = 1,~\text{for all}~ k \\
    &\begin{aligned}
    &e_k(t+1) = \frac{1}{MN_{Rf}}\times \\
    &\tr~\!\Ckk  \left(\frac{\rho\Ibf}{MN_{RF}} +\frac{1}{MN_{RF}} \sum_{k} \frac{1}{1+e_k(t)}\Ckk \right)^{-1},
    \end{aligned}
\end{align}
\end{subequations}
and
\begin{subequations}
\label{eq:sbfp}
\begin{gather}
    \Sbf' = \Sbf^2 + \Sbf\frac{1}{\Nrf}\sum_{k=1}^K \frac{\Ckk e'_{k}}{(1+ e_{k})^2}\Sbf,\\
    e'_{k} = \left(\Ibf -\Jbf\right)^{-1}\vbf,\\
    [\vbf]_{k} = \frac{1}{\Nrf}\tr~\!\left( \Ckk\Sbf^2\right),     [\Jbf]_{kk'}= \frac{\tr~\!\left(\Ckk \Sbf\Ckpkp\Sbf\right)}{\Nrf^2(1+ e_{k'})^2},
    \end{gather}
\end{subequations}
\vspace{-1cm}
\begin{subequations}
\label{eq:sbfp1}
    \begin{gather}
    \mathbf{S}'_{k'} = \Sbf\Ckpkp\Sbf + \Sbf\frac{1}{M\Nrf}\sum_{k=1}^K \frac{\Ckpkp e''_k}{(1+ e_k)^2}\Sbf,\\
    e''_k = \left(\Ibf -\Jbf\right)^{-1}\vbf_{k'},\\
    [\vbf_{k'}]_{k} = \frac{1}{M\Nrf}\tr~\!\left(\Ckk\Sbf\Ckpkp\Sbf\right),
\end{gather}
\end{subequations}
\end{Theorem}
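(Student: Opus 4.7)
The plan is to derive~\eqref{eq:apdlSINR} by applying standard deterministic-equivalent techniques---Sherman-Morrison rank-one updates, the trace lemma, and the fixed-point equivalents of the Hoydis-Couillet-Debbah school---to each term of the exact SINR~\eqref{eq:dlSINRexp}. I would first substitute $\vbf_k^d=\hat{\vbf}_k^d/\nu_k$, use the orthogonality $\mathbb{E}\{\tilde{\hbf}_k\herm\Abf\hhk\}=0$ granted by LMMSE channel estimation to kill the cross terms involving the estimation error $\tilde{\hbf}_k=\hbf_k-\hhk$, and rewrite each of the three pieces---signal $|\mathbb{E}\{\hbf_k\herm\vbf_k^d\}|^2$, interference $\mathbb{E}\{|\hbf_k\herm\vbf_{k'}^d|^2\}$, and variance $\mathbb{V}\{\hbf_k\herm\vbf_k^d\}$---as an expectation of a quadratic form in $\hhk$ (or $\hhk$ and $\hhkp$) involving $\vect{\Omega}_{\rm RZF}^{-1}$ or a product $\vect{\Omega}_{\rm RZF}^{-1}\Abf\vect{\Omega}_{\rm RZF}^{-1}$. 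The per-AP normalization $\nu_k^2=\max_m\mathbb{E}\|\hat{\vbf}_{mk}^d\|^2$ is itself such a quadratic form, obtained by writing $\hat{\vbf}_{mk}^d=\Ebf_m\herm\vect{\Omega}_{\rm RZF}^{-1}\hhk$ with $\Ebf_m$ selecting the $m$-th $\Nrf$-block of coordinates.

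For the signal term I would apply a single Sherman-Morrison update to peel the $\hhk$-dependence off $\vect{\Omega}_{\rm RZF}$, obtaining $\hhk\herm\vect{\Omega}_{\rm RZF}^{-1}\hhk=\hhk\herm\vect{\Omega}_{{\rm RZF},k}^{-1}\hhk/(1+\hhk\herm\vect{\Omega}_{{\rm RZF},k}^{-1}\hhk)$, then invoke the trace lemma and the first-order fixed-point deterministic equivalent $\Sbf$ in~\eqref{eq:sbf} to conclude the limit is $\tr\Ckk\Sbf/(M\Nrf+\tr\Ckk\Sbf)$, which is precisely $\nu_k\sqrt{\alpha_k}$. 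For the interference and variance terms, whose integrands contain the product $\vect{\Omega}_{\rm RZF}^{-1}\Abf\vect{\Omega}_{\rm RZF}^{-1}$, I would apply two rank-one updates (on $\hhk$ and $\hhkp$) and then use the second-order deterministic equivalent $\Sbf'$ in~\eqref{eq:sbfp} (taking $\Abf=\Ibf$) or $\Sbf'_{k'}$ in~\eqref{eq:sbfp1} (taking $\Abf=\Ckpkp$); the three contributions then line up with the three terms inside the bracket defining $\beta_{kk'}$ once the Sherman-Morrison denominators $(1+e_k)^2=(M\Nrf+\tr\Ckk\Sbf)^2/(M\Nrf)^2$ are absorbed. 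For $\nu_k^2$ the same second-order equivalent applied to $\Abf=\Ebf_m\Ebf_m\herm$ collapses to $\tr\Ckk[m]\Sbf'[m]$, because every $\Ckk$ (hence $\Sbf$ and $\Sbf'$) is block-diagonal across APs, which annihilates the cross-AP blocks of the deterministic equivalent of $\vect{\Omega}_{\rm RZF}^{-1}$ in the limit.

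The main obstacle I anticipate is commuting the deterministic-equivalent limit with the $\max_{m\in[M]}$ that defines $\nu_k$: the limit replaces each random per-AP variance by a non-random quantity $f_{mk}=\tr\Ckk[m]\Sbf'[m]/(M\Nrf+\tr\Ckk\Sbf)^2$, and one must argue that $\max_m\mathbb{E}\|\hat{\vbf}_{mk}^d\|^2\to\max_m f_{mk}$. Since $M$ is fixed while $\Nrf$ (and hence the effective problem dimension) grows, this follows from the uniform-in-$m$ concentration of each per-AP variance to its deterministic value combined with the continuity of $\max$ on the finite index set $[M]$. Finally, assembling the three deterministic equivalents, dividing the signal by $\nu_k^2$ and each interference term by $\nu_{k'}^2$, and collecting constants yields~\eqref{eq:apdlSINR} with $\alpha_k$, $\beta_{kk'}$, and $\nu_k$ as stated.
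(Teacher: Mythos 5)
Your proposal follows essentially the same route as the paper: Sherman--Morrison rank-one peeling of $\hhk$ and $\hhkp$ from $\vect{\Omega}_{\rm RZF}$, the trace lemma, and the first- and second-order deterministic equivalents $\Sbf$, $\Sbf'$, $\Sbf'_{k'}$ of Wagner \emph{et al.}, with the per-AP normalization $\nu_k$ extracted from the block-diagonal structure of $\Ckk$ and $\Sbf'$ exactly as in the paper's appendix. The only differences are cosmetic: the paper simply sets $\mathbb{V}\{\hbf_k\herm \vbf_k^d\}\approx 0$ rather than passing it through the second-order equivalent, and your attention to commuting the $\max_{m\in[M]}$ with the deterministic-equivalent limit is a point the paper glosses over.
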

\begin{proof}
The proof is provided in Appendix~\ref{app:rmdl}.
\end{proof}

Optimizing the total transmit power of each user to maximize the minimum downlink SINR among the users is an interesting venue left for future publication due to space constraint.  

\begin{table}[t]
    \renewcommand{\arraystretch}{1.4}
    \begin{center}
        \caption{Simulation Setup and Parameters.}~\label{tbl:1}
        \fontsize{8}{8}\selectfont
        \begin{tabular}{|>{\centering\arraybackslash}m{6cm}|>{\centering\arraybackslash}m{1.5cm}|>{\centering\arraybackslash}m{0.6cm}|}\hline
            \textbf{Parameter}       &    \textbf{Value}     \\ \hline
            $f_c$ (Carrier frequency)        &    1.9 GHz            \\ \hline
            BW (Bandwidth)               &    20 MHz             \\ \hline
            Noise figure             &    9 dB               \\ \hline
            $\tau_c$ (Length of coherence Interval) & 200 \\ \hline
            $\tau_p$ (Length of pilot sequence)  & 16 \\ \hline

            $P_{u}$ (UL maximum transmit power per data symbol)  &  20 mW             \\ \hline
            $P_{p}$ (UL maximum transmit power per pilot symbol) & 20 mW             \\ \hline
            $P_{d}$ (DL maximum transmit power per user) & 200 mW             \\ \hline
$K$ (Number of simultaneous users) & $16$\\ \hline 
            Cell area & $100\! \times\! 100~\!$m$^2$ \\ \hline
        \end{tabular}
    \end{center}
\end{table}
\section{Simulations and Numerical Evaluations}
\label{sec:sim}

\begin{figure}[t]
    \centering
    \includegraphics[width = 0.8\linewidth]{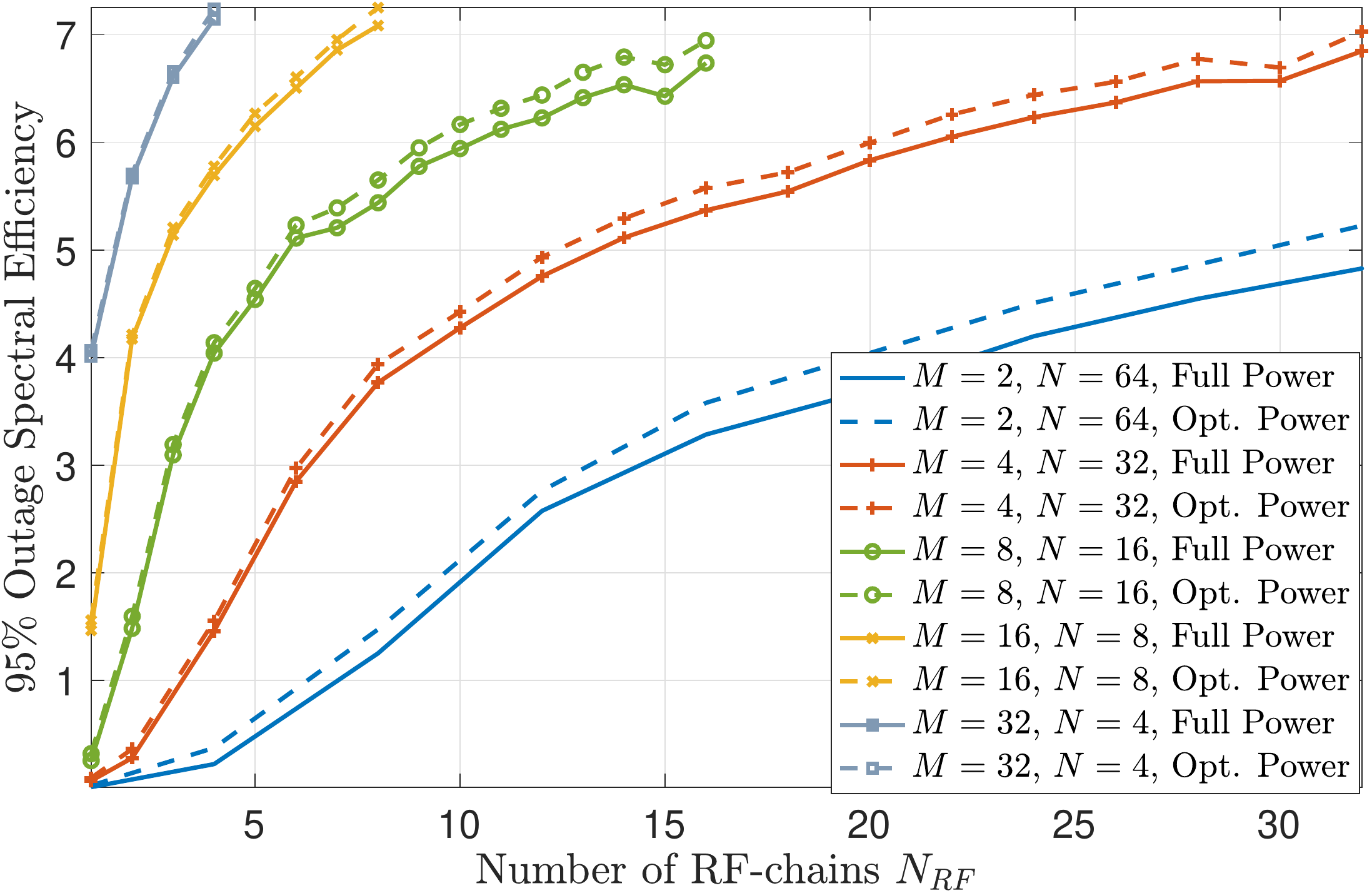}
    \caption{$95\%$ outage spectral efficiency for different number of APs, antennas, and RF-chains.}
    \label{fig:powopt}
\end{figure}

For our simulations, we use the parameters presented in Table~\ref{tbl:1}. Even though orthogonal pilots can be used since pilot length $= 16 = K$, we use random pilots in our simulations.
First, we investigate the accuracy of the provided approximations in Thm.~\ref{thm:rmup} and Thm.~\ref{thm:rmdl} for the uplink and downlink SINRs, respectively, then we investigate the impact of number of RF-chains and power optimization on the system's performance for uplink scenario.

\subsection{SINR Approximations}
The empirical CDF of uplink spectral efficiency based on exact SINR \eqref{eq:upSINRexp} and its first and second approximations  \eqref{eq:ur1}, \eqref{eq:ur2} are plotted for different number of users and RF-chains in Fig.~\ref{fig:upA11} and Fig.~\ref{fig:upA2}, respectively. We observe that the second approximation is good almost everywhere while the first approximation is only good when $M\Nrf>K$. However, the first approximation has smaller computation complexity compared to the second one.

The empirical CDF of uplink spectral efficiency based on exact downlink SINR in \eqref{eq:dlSINRexp} and its approximation in Thm.~\ref{thm:rmdl} is provided in Fig.~\ref{fig:dlA}. We observe that the approximation accurately estimates the exact CDF everywhere.

\subsection{Uplink Power optimization and Impact of number of RF-chains}

The $95\%$ outage spectral efficiency for different number of APs, antennas, and RF-chains are plotted in Fig.~\ref{fig:powopt}. Here, the solid lines are calculated based on the exact expression \eqref{eq:upSINRexp} when all users transmit with full power and the dashed lines are when the transmit power of the users are optimized using the  Alg.~\ref{Alg:pOpt} using the SINR approximation in \eqref{eq:ur2}. We observe that maximizing the minimum SINR does not improve the performance of the system visibly. Moreover, Fig.~\ref{fig:powopt} gives an important insight on the trade off between the number of RF-chains and system performance for different values of $M$ and $N$ considering the discussed BF technique in Sec.~\ref{subsec:ABF}.

\section{Conclusion}
In this paper, a Cell-Free mMIMO system with hybrid APs is considered ,where each AP uses a hybrid beamfoming scheme in which the analog beamformer is designed based on the large scale statistics of the channel, and the digital beamformer is designed based on the estimated effective channel between the APs and users. Using MMSE combiner for uplink and RZF precoder for downlink, SINR expressions for both uplink and downlink scenarios along with accurate random matrix approximations of the SINR expressions which only depend on large scale statistics of the system are provided. Furthermore, optimality achieving power optimization method that maximizes minimum SINR among the users for the uplink scenario is presented. Finally, through various simulations of practical systems the accuracy of the derived approximations is validated and the impact of power optimization, and number of RF-chains on the system's perfomance is investigated. 

\bibliography{main_ICC.bbl}

\clearpage
\appendices
\section{Useful Lemmas and Theorems}
For our derivations, we make use of the following results

\begin{lemma} \label{Lemma 1}(\textbf{Matrix Inversion Lemma})\cite{silverstein1995empirical}
    Let $\mathbf{U}$ be an $M \times M$ invertible matrix and $\mathbf{x} \in \mathbb{C}^{M\times 1}$, $c\in \mathbb{C}$ for which $\mathbf{U} + c\mathbf{xx}^{H}$ is invertible. Then
    \begin{equation}
    \mathbf{x}^{H}\left(\mathbf{U} + c\mathbf{xx}^{H}\right)^{-1} = \frac{\mathbf{x}^{H}\mathbf{U}^{-1}}{1 + c\mathbf{x}^{H}\mathbf{U}^{-1}\mathbf{x}}.
    \end{equation}
\end{lemma}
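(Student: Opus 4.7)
The plan is to prove the identity by a direct right-multiplication verification. Since $\mathbf{U}+c\mathbf{x}\mathbf{x}^{H}$ is invertible by hypothesis, it suffices to show that
\[\frac{\mathbf{x}^{H}\mathbf{U}^{-1}}{1+c\,\mathbf{x}^{H}\mathbf{U}^{-1}\mathbf{x}}\bigl(\mathbf{U}+c\mathbf{x}\mathbf{x}^{H}\bigr)=\mathbf{x}^{H},\]
and then right-multiply both sides by $(\mathbf{U}+c\mathbf{x}\mathbf{x}^{H})^{-1}$ to obtain the claimed formula. This reduces a statement about inverses to a one-line scalar/vector algebraic manipulation.

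First I would verify that the scalar $1+c\,\mathbf{x}^{H}\mathbf{U}^{-1}\mathbf{x}$ appearing in the denominator is nonzero, so that the right-hand side is well-defined. This follows from the matrix determinant lemma, which gives $\det(\mathbf{U}+c\mathbf{x}\mathbf{x}^{H})=\det(\mathbf{U})\bigl(1+c\,\mathbf{x}^{H}\mathbf{U}^{-1}\mathbf{x}\bigr)$; since both $\mathbf{U}$ and $\mathbf{U}+c\mathbf{x}\mathbf{x}^{H}$ are invertible, neither determinant vanishes, so the scalar factor is nonzero.

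The core calculation is then a single expansion. Distributing the product gives
\[\mathbf{x}^{H}\mathbf{U}^{-1}\mathbf{U}+c\,\mathbf{x}^{H}\mathbf{U}^{-1}\mathbf{x}\mathbf{x}^{H}=\mathbf{x}^{H}+c\,(\mathbf{x}^{H}\mathbf{U}^{-1}\mathbf{x})\,\mathbf{x}^{H}=\bigl(1+c\,\mathbf{x}^{H}\mathbf{U}^{-1}\mathbf{x}\bigr)\mathbf{x}^{H},\]
where I used that $\mathbf{x}^{H}\mathbf{U}^{-1}\mathbf{x}$ is a scalar and therefore commutes with $\mathbf{x}^{H}$. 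Dividing through by the nonzero scalar $1+c\,\mathbf{x}^{H}\mathbf{U}^{-1}\mathbf{x}$ and then right-multiplying by $(\mathbf{U}+c\mathbf{x}\mathbf{x}^{H})^{-1}$ yields the statement of the lemma.

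There is essentially no obstacle in this proof: the identity is the classical Sherman--Morrison formula specialized so that only $\mathbf{x}^{H}$ multiplies the rank-one-updated inverse from the left, and the only subtlety is confirming the non-vanishing of the denominator, which is handled cleanly above. An equivalent shortcut is to quote the full Sherman--Morrison formula $(\mathbf{U}+c\mathbf{x}\mathbf{x}^{H})^{-1}=\mathbf{U}^{-1}-\frac{c\,\mathbf{U}^{-1}\mathbf{x}\mathbf{x}^{H}\mathbf{U}^{-1}}{1+c\,\mathbf{x}^{H}\mathbf{U}^{-1}\mathbf{x}}$, left-multiply by $\mathbf{x}^{H}$, and combine the two resulting terms over the common denominator $1+c\,\mathbf{x}^{H}\mathbf{U}^{-1}\mathbf{x}$ to recover the right-hand side.
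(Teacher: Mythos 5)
Your proof is correct and complete. The paper itself offers no proof of this lemma --- it is simply quoted from the cited reference --- so there is no in-paper argument to compare against; your direct verification (right-multiplying the claimed expression by $\mathbf{U}+c\mathbf{x}\mathbf{x}^{H}$ and checking it reproduces $\mathbf{x}^{H}$) is the standard one-line derivation, and you correctly handle the only genuine subtlety, namely that the scalar $1+c\,\mathbf{x}^{H}\mathbf{U}^{-1}\mathbf{x}$ is nonzero, via the matrix determinant lemma together with the invertibility hypotheses on both $\mathbf{U}$ and $\mathbf{U}+c\mathbf{x}\mathbf{x}^{H}$.
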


\begin{lemma}\cite[Lemma 4 and 5]{wagner2012large}\label{Lemma 2}
    Let $\mathbf{A} \in \mathbb{C}^{M\times M}$ and $\mathbf{x},\mathbf{y} \sim \mathcal{CN} (0, \frac{1}{M}\mathbf{I}_{M}).$ Assume that $\mathbf{A}$ has uniformly bounded spectral norm (with respect to $M$) and that $\mathbf{x}$ and $\mathbf{y}$ are mutually independent and independent of $\mathbf{A}$. Then,
    \begin{subequations}
    \begin{align}
    \mathbf{x}^{H}\mathbf{A}\mathbf{x} - \frac{1}{M}\text{tr}\mathbf{A}\xrightarrow[M \rightarrow \infty]{\text{a.s.}} 0,
    \end{align}
    \begin{align}
    \label{eq:x_A_y}
    \mathbf{x}^{H}\mathbf{A}\mathbf{y} \xrightarrow[M \rightarrow \infty]{\text{a.s.}} 0.
    \end{align}
    \end{subequations}
\end{lemma}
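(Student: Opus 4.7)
The plan is to establish both almost-sure convergences via the classical moment-plus-Borel--Cantelli route. First I would compute the means: writing $\mathbf{x}=\frac{1}{\sqrt{M}}\mathbf{z}$ with $\mathbf{z}\sim\mathcal{CN}(0,\mathbf{I}_M)$, independence of the entries of $\mathbf{z}$ gives $\mathbb{E}[\mathbf{x}^H\mathbf{A}\mathbf{x}]=\frac{1}{M}\text{tr}\mathbf{A}$, while independence and zero-mean of $\mathbf{x},\mathbf{y}$ give $\mathbb{E}[\mathbf{x}^H\mathbf{A}\mathbf{y}]=0$. Hence in both cases the task reduces to controlling the deviation of a (bi)linear Gaussian form from its expectation, for which a fourth-moment bound will be enough.

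Next I would bound the fourth central moments. For the quadratic form, expanding $\mathbb{E}|\mathbf{x}^H\mathbf{A}\mathbf{x}-\frac{1}{M}\text{tr}\mathbf{A}|^{4}$ via Isserlis'/Wick's theorem, one collects Gaussian pairings of the entries of $\mathbf{z}$ and obtains an upper bound of the shape $\frac{C}{M^{2}}\,\text{tr}\!\left((\mathbf{A}\mathbf{A}^{H})^{2}\right)\le \frac{C\|\mathbf{A}\|^{4}}{M^{2}}$; the uniform spectral-norm hypothesis on $\mathbf{A}$ is exactly what turns this into an $O(M^{-2})$ bound. For the bilinear form, I would first condition on $\mathbf{y}$: given $\mathbf{y}$, the vector $\mathbf{w}=\mathbf{A}\mathbf{y}$ is deterministic and $\mathbf{x}^{H}\mathbf{w}$ is a one-dimensional complex Gaussian of variance $\|\mathbf{w}\|^{2}/M$, so $\mathbb{E}[|\mathbf{x}^{H}\mathbf{w}|^{4}\mid\mathbf{y}]=\frac{2}{M^{2}}\|\mathbf{w}\|^{4}$. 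Taking expectation in $\mathbf{y}$ and using $\mathbb{E}\|\mathbf{A}\mathbf{y}\|^{4}\le C\|\mathbf{A}\|^{4}$ (again by Gaussian moments together with the spectral-norm bound) yields $\mathbb{E}|\mathbf{x}^{H}\mathbf{A}\mathbf{y}|^{4}\le C/M^{2}$.

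Finally, Markov's inequality applied to these fourth-moment bounds gives, for every $\epsilon>0$, $\mathbb{P}(|\mathbf{x}^{H}\mathbf{A}\mathbf{x}-\frac{1}{M}\text{tr}\mathbf{A}|>\epsilon)\le C/(\epsilon^{4}M^{2})$ and analogously for $\mathbf{x}^{H}\mathbf{A}\mathbf{y}$; both series are summable in $M$, so the Borel--Cantelli lemma delivers the two almost-sure limits. The argument is essentially the one in \cite{silverstein1995empirical,wagner2012large}; the only care needed is in the Wick-pairing bookkeeping, to make sure the factor $\|\mathbf{A}\|$ appears with the correct power so the bound is genuinely $O(M^{-2})$ rather than a weaker rate. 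No truncation, martingale-difference, or CLT machinery is required because Gaussianity provides exact moment formulas, which is the main reason the proof is short.
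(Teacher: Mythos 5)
This lemma is imported verbatim from \cite[Lemmas 4 and 5]{wagner2012large} (ultimately the Bai--Silverstein trace lemma), and the paper gives no proof of it, so there is no in-paper argument to compare yours against. Your route --- exact Gaussian moment formulas, a fourth-moment bound, Markov, and Borel--Cantelli --- is the standard proof and is sound in outline; in particular the conditioning-on-$\mathbf{y}$ treatment of the bilinear form is clean, since $\mathbb{E}\|\mathbf{A}\mathbf{y}\|^4 \le \|\mathbf{A}\|^4\,\mathbb{E}\|\mathbf{y}\|^4 \le 2\|\mathbf{A}\|^4$. One piece of bookkeeping is off as written, though: for the quadratic form the Wick expansion of $\mathbb{E}\bigl|\mathbf{x}^H\mathbf{A}\mathbf{x}-\tfrac{1}{M}\mathrm{tr}\,\mathbf{A}\bigr|^4$ produces two families of pairings, giving a bound of the shape
\begin{equation*}
\frac{C}{M^4}\Bigl[\bigl(\mathrm{tr}\,\mathbf{A}\mathbf{A}^H\bigr)^2+\mathrm{tr}\bigl((\mathbf{A}\mathbf{A}^H)^2\bigr)\Bigr]\le \frac{C\|\mathbf{A}\|^4}{M^2},
\end{equation*}
in which the squared-variance term $(\mathrm{tr}\,\mathbf{A}\mathbf{A}^H)^2$ is the dominant one; the single term $\tfrac{C}{M^2}\mathrm{tr}((\mathbf{A}\mathbf{A}^H)^2)$ you wrote down would only yield $O(M^{-1})$ after bounding the trace by $M\|\mathbf{A}\|^4$, which is not summable. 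The correct expansion still delivers the $O(M^{-2})$ rate you need, so this is a repairable slip rather than a flaw in the method. You are also right that Gaussianity is what lets you skip the truncation and higher-moment hypotheses (finiteness of $\mathbb{E}|z_1|^{8}$ for the $p=4$ bound) that the general non-Gaussian version of the trace lemma requires; the only remaining formality is to condition on the possibly random $\mathbf{A}$, which the assumed independence and uniform spectral-norm bound make immediate.
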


\begin{lemma}\cite[Lemma 6]{wagner2012large}\label{Lemma 3}
Let $\mathbf{A} \in \mathbb{C}^{M\times M}$, be deterministic with uniformly bounded spectral norm and $\mathbf{B} \in \mathbb{C}^{M\times M}$, be random Hermitian, with eigenvalues $\lambda_{1}^{\mathbf{B}_{M}}\leq \lambda_{2}^{\mathbf{B}_{M}} \cdots \leq \lambda_{M}^{\mathbf{B}_{M}}$ such that, with probability 1, there exist $\epsilon > 0$ for which $\lambda_{1}^{\mathbf{B}_{M}} > \epsilon$ for all large $M$. Then for $v \in \mathbb{C}^{M\times 1}$
\begin{equation}
\frac{1}{M}\tr~\!\mathbf{A}\mathbf{B}^{-1} - \frac{1}{M}\tr~\!\mathbf{A}\left(\mathbf{B} + \mathbf{vv}^{H}\right)^{-1} \xrightarrow[M \rightarrow \infty]{\text{a.s.}} 0
\end{equation} 
almost surely, where $\mathbf{B}^{-1}$ and $\left(\mathbf{B} + \mathbf{vv}^{H}\right)^{-1}$ exist with probability 1.   
\end{lemma}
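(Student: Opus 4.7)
The plan is to give an explicit algebraic proof via the Sherman-Morrison formula and an operator-inequality trick that bounds the residual by an $M$-independent constant; dividing by $M$ then immediately gives almost-sure convergence to zero.

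First, I would upgrade Lemma~1 to the full Sherman-Morrison identity
\[
(\mathbf{B}+\mathbf{v}\mathbf{v}^H)^{-1} \;=\; \mathbf{B}^{-1} \;-\; \frac{\mathbf{B}^{-1}\mathbf{v}\mathbf{v}^H\mathbf{B}^{-1}}{1+\mathbf{v}^H\mathbf{B}^{-1}\mathbf{v}},
\]
which follows from the stated form of Lemma~1 by right-multiplying by $\mathbf{v}$ and rearranging (a one-line check I would include at the start of the proof). Left-multiplying by $\mathbf{A}/M$, taking the trace, and using cyclic invariance of the trace yields the exact equality
\[
\frac{1}{M}\tr\mathbf{A}\mathbf{B}^{-1} \;-\; \frac{1}{M}\tr\mathbf{A}(\mathbf{B}+\mathbf{v}\mathbf{v}^H)^{-1} \;=\; \frac{1}{M}\cdot\frac{\mathbf{v}^H\mathbf{B}^{-1}\mathbf{A}\mathbf{B}^{-1}\mathbf{v}}{1+\mathbf{v}^H\mathbf{B}^{-1}\mathbf{v}}.
\]
Since this is a deterministic identity rather than a limit statement, the task reduces to showing the right-hand side tends to zero almost surely.

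Second, I would bound the right-hand side uniformly in $\mathbf{v}$; this is what makes the result somewhat striking, as no moment condition or normalization on $\mathbf{v}$ is imposed. Using $|\mathbf{v}^H\mathbf{B}^{-1}\mathbf{A}\mathbf{B}^{-1}\mathbf{v}| \leq \|\mathbf{A}\|\,\mathbf{v}^H\mathbf{B}^{-2}\mathbf{v}$ together with the operator inequality $\mathbf{B}^{-2} \preceq \epsilon^{-1}\mathbf{B}^{-1}$, valid on the probability-one event $\{\lambda_1^{\mathbf{B}_M} > \epsilon\}$ for all sufficiently large $M$, I would obtain
\[
\left|\frac{\mathbf{v}^H\mathbf{B}^{-1}\mathbf{A}\mathbf{B}^{-1}\mathbf{v}}{1+\mathbf{v}^H\mathbf{B}^{-1}\mathbf{v}}\right| \;\leq\; \frac{\|\mathbf{A}\|}{\epsilon}\cdot\frac{\mathbf{v}^H\mathbf{B}^{-1}\mathbf{v}}{1+\mathbf{v}^H\mathbf{B}^{-1}\mathbf{v}} \;\leq\; \frac{\|\mathbf{A}\|}{\epsilon},
\]
an $M$-independent constant by the uniform spectral-norm bound on $\mathbf{A}$. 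Dividing by $M$ then shows that the trace difference is $O(1/M)$ almost surely, so it tends to zero almost surely, which completes the proof.

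The only real subtlety is the quadratic-to-linear swap $\mathbf{v}^H\mathbf{B}^{-2}\mathbf{v} \leq \epsilon^{-1}\mathbf{v}^H\mathbf{B}^{-1}\mathbf{v}$: without it one would bound the numerator only by $\|\mathbf{A}\|\|\mathbf{v}\|^2/\epsilon^2$ and would then need an auxiliary hypothesis such as $\|\mathbf{v}\|^2 = O(M)$ to close the argument. Absorbing one power of $\mathbf{B}^{-1}$ into the denominator $1+\mathbf{v}^H\mathbf{B}^{-1}\mathbf{v}$ is the whole content of the lemma and is the sole step I would flag as non-routine; the Sherman-Morrison algebra and the final $1/M$ factor are mechanical. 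I would also briefly justify invertibility by noting that on the almost-sure event where $\lambda_1^{\mathbf{B}_M} > \epsilon$ the matrix $\mathbf{B}$ is positive definite, and hence so is $\mathbf{B}+\mathbf{v}\mathbf{v}^H$, so both inverses exist with probability one as asserted.
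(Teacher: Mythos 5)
The paper does not actually prove this lemma --- it is imported verbatim as \cite[Lemma 6]{wagner2012large} (itself a restatement of the classical Bai--Silverstein rank-one perturbation, or ``trace'', lemma), so there is no in-paper argument to compare against. Your proof is correct, self-contained, and essentially the standard proof of that classical result: Sherman--Morrison turns the trace difference into the exact rank-one expression $\frac{1}{M}\,\mathbf{v}^{H}\mathbf{B}^{-1}\mathbf{A}\mathbf{B}^{-1}\mathbf{v}\,/\,(1+\mathbf{v}^{H}\mathbf{B}^{-1}\mathbf{v})$, and the only non-routine step is exactly the one you flag, namely using $\mathbf{B}^{-2}\preceq \epsilon^{-1}\mathbf{B}^{-1}$ on the almost-sure event where $\lambda_{1}^{\mathbf{B}_{M}}>\epsilon$ so that one factor of $\mathbf{B}^{-1}$ cancels against the denominator and yields the $\mathbf{v}$-independent bound $\|\mathbf{A}\|/(\epsilon M)$. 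Two minor points worth making explicit in a write-up: (i) the bound $|\mathbf{v}^{H}\mathbf{B}^{-1}\mathbf{A}\mathbf{B}^{-1}\mathbf{v}|\le\|\mathbf{A}\|\,\mathbf{v}^{H}\mathbf{B}^{-2}\mathbf{v}$ uses that $\mathbf{B}^{-1}$ is Hermitian (which holds here since $\mathbf{B}$ is Hermitian), and the denominator $1+\mathbf{v}^{H}\mathbf{B}^{-1}\mathbf{v}\ge 1$ since $\mathbf{B}\succ 0$ on that event, so no division-by-zero issue arises; (ii) since $\epsilon$ is permitted to depend on the realization, the final bound $\|\mathbf{A}\|/(\epsilon M)\to 0$ is a pathwise statement on a probability-one event, which together with the uniform (in $M$) bound on $\|\mathbf{A}\|$ is precisely what almost-sure convergence requires.
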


\begin{theorem}\cite[Thm. 1]{wagner2012large}\label{Theorem 1}
Let $\mathbf{S} \in \mathbb{C}^{M\times M}$ and  $\mathbf{Q} \in \mathbb{C}^{M\times M}$ be Hermitian non-negative definite and let $\mathbf{X} \in \mathbb{C}^{M \times K}$ be random with independent column vectors $\mathbf{x}_{k} \sim \mathcal{CN} (0, \frac{1}{M}\mathbf{R}_{k})$ and $\mathbf{R}_{k},\, k = 1,..., K$ have uniformly bounded spectral norms (with respect to M). Then, for any $z > 0$, 
\begin{equation}
\begin{aligned}
    &\frac{1}{M}\tr~\!\mathbf{Q}\left(\mathbf{X}\mathbf{X}^{H} + \mathbf{S} + z\mathbf{I}\right)^{-1} - \frac{1}{M}\tr~\!\mathbf{Q}\mathbf{T}&\xrightarrow[M \rightarrow \infty]{\text{a.s.}} 0,
\end{aligned}
\end{equation}
where $\mathbf{T} \in \mathbb{C}^{M\times M}$ is given by
\begin{equation}
\mathbf{T} = \left(\frac{1}{M}\sum_{k=1}^{K}\frac{\mathbf{R}_{k}}{1 + e_{k}} + \mathbf{S} + z\mathbf{I}_{M}\right)^{-1}.
\end{equation}
Here, $e_{k} = \lim_{t\rightarrow \infty} e_{k}{(t)}$ and $e_{k}{(t)}$ is obtained by
\begin{equation}
e_{k}{(t)} = \frac{1}{M}\tr~\!\mathbf{R}_{k}\bigg(\frac{1}{M}\sum_{k'=1}^{K}\frac{\mathbf{R}_{k'}}{1 + e_{k'}{(t-1)}} + \mathbf{S}+ z\mathbf{I}_{M}\bigg)^{-1}.
\end{equation} 
where $t = 1, 2,...$ and $e_{k}^{(0)}(z) = 1/z$ for $k = 1, 2,..., K$. 
\end{theorem}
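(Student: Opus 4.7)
\medskip

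\noindent\textbf{Proof Plan for Theorem~\ref{thm:rmdl}.}

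The plan is to analyze the unnormalized combiner $\hat{\vbf}_k^d=\vect{\Omega}_{\rm RZF}^{-1}\hhk$ and handle the per-AP normalization $\nu_k$ separately at the end. Writing $\hbf_k=\hhk+\tilde{\hbf}_k$ with $\tilde{\hbf}_k\perp\hhk$ (MMSE orthogonality), and letting $\vect{\Omega}_{{\rm RZF},S}=\vect{\Omega}_{\rm RZF}-\sum_{j\in S}\hat\hbf_j\hat\hbf_j\herm$ denote the resolvent with the columns indexed by $S$ removed, the strategy is to: (i) peel off the rank-one term $\hhk\hhk\herm$ from $\vect{\Omega}_{\rm RZF}$ using Lemma~\ref{Lemma 1} so that the resulting quantities are independent of $\hhk$; (ii) apply Lemma~\ref{Lemma 2} to convert random quadratic forms into traces; and (iii) invoke Theorem~\ref{Theorem 1} (together with its standard extension for $\frac{1}{M}\tr\,\mathbf{Q}\mathbf{T}\mathbf{A}\mathbf{T}$-type expressions, which furnishes $\Sbf'$ and the auxiliary $e'_k$) to replace traces of resolvents by deterministic quantities involving $\Sbf$, $\Sbf'$, $\Sbf'_{k'}$.

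First I would handle the signal term $|\Exp\{\hbf_k\herm\hat\vbf_k^d\}|^2$. Because $\tilde\hbf_k$ is orthogonal to $\hhk$ and independent of $\vect{\Omega}_{{\rm RZF},k}$, only the $\hhk\herm\vect{\Omega}_{\rm RZF}^{-1}\hhk$ piece survives. Applying Lemma~\ref{Lemma 1} gives
\begin{equation*}
\hhk\herm\vect{\Omega}_{\rm RZF}^{-1}\hhk=\frac{\hhk\herm\vect{\Omega}_{{\rm RZF},k}^{-1}\hhk}{1+\hhk\herm\vect{\Omega}_{{\rm RZF},k}^{-1}\hhk}.
\end{equation*}
By Lemma~\ref{Lemma 2} and Lemma~\ref{Lemma 3}, the numerator concentrates on $\tr\,\Ckk\vect{\Omega}_{{\rm RZF},k}^{-1}$, which by Theorem~\ref{Theorem 1} (applied with covariances $\Ckk$ and $\Rbf$, $\mathbf{S}=\mathbf{0}$, $z=\rho$, after matching the $\frac{1}{M\Nrf}$ normalization) is a.s.\ equivalent to $\tr\,\Ckk\Sbf$ with $\Sbf$ the fixed point \eqref{eq:sbf}. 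Combining yields $\Exp\{\hbf_k\herm\hat\vbf_k^d\}\approx \tr\,\Ckk\Sbf/(M\Nrf+\tr\,\Ckk\Sbf)$, which, after division by $\nu_k$, becomes $\sqrt{\alpha_k}$.

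Next I would tackle the interference term $\Exp\{|\hbf_k\herm\hat\vbf_{k'}^d|^2\}$ for $k'\ne k$. Here $\hbf_k$ is correlated with $\vect{\Omega}_{\rm RZF}$ only through $\hhk$ (via the estimate), so I split $\hbf_k=\hhk+\tilde\hbf_k$ and remove both $\hhk$ and $\hhkp$ from $\vect{\Omega}_{\rm RZF}$ via two successive applications of Lemma~\ref{Lemma 1}. The cross-term $\hhk\herm\vect{\Omega}_{{\rm RZF},kk'}^{-1}\hhkp\hhkp\herm\vect{\Omega}_{{\rm RZF},kk'}^{-1}\hhk$ concentrates on $\tr\,\Ckk\vect{\Omega}_{{\rm RZF},kk'}^{-1}\Ckpkp\vect{\Omega}_{{\rm RZF},kk'}^{-1}$, which requires a deterministic equivalent for a product of two resolvents; this is precisely what the auxiliary system \eqref{eq:sbfp1} with $e''_k$ computes, giving $\Sbf'_{k'}$. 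The independent contribution from $\tilde\hbf_k$ yields the $\tr\,\Rbf_k\Sbf'_{k'}$ term by standard identities for the estimation-error covariance $\Rbf_k-\Ckk$. Assembling the three pieces, dividing by $\nu_{k'}^2(M\Nrf+\tr\,\Ckpkp\Sbf)^2$ from the Sherman--Morrison denominators and from the normalization, reproduces $\beta_{kk'}$. The self-variance term $P_k^d\mathbb{V}\{\hbf_k\herm\vbf_k^d\}$ is treated analogously, where $\Sbf'$ (from \eqref{eq:sbfp}) replaces $\Sbf'_{k'}$; I expect it to be absorbed into the interference form via the same deterministic equivalents or to vanish at leading order.

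Finally, the normalization $\nu_k^2=\max_m\Exp\{|\hat\vbf_{mk}^d|^2\}$ equals $\max_m\Exp\{\hhk\herm\vect{\Omega}_{\rm RZF}^{-1}\Ebf_m\vect{\Omega}_{\rm RZF}^{-1}\hhk\}$ with $\Ebf_m$ the block projector onto AP $m$'s coordinates. Another Sherman--Morrison reduction plus the squared-resolvent equivalent yields $\tr\,\Ckk[m]\Sbf'[m]/(M\Nrf+\tr\,\Ckk\Sbf)^2$, matching the stated $\nu_k^2$. The main obstacle I anticipate is bookkeeping of the two coupled fixed-point systems $(\Sbf,e_k)$ and $(\Sbf',e'_k)$ (and its per-$k'$ variant $(\Sbf'_{k'},e''_k)$), and correctly identifying that the extended trace-of-resolvent-squared equivalent (the natural analogue of Theorem~\ref{Theorem 1} obtained by implicit differentiation in $z$) yields exactly \eqref{eq:sbfp}--\eqref{eq:sbfp1}; this is the one ingredient not reproduced verbatim in Appendix A but standard in the Wagner--Couillet framework.
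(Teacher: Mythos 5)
Your proposal does not address the statement in question. The statement is Theorem~\ref{Theorem 1}, the deterministic-equivalent result of \cite{wagner2012large} for $\frac{1}{M}\tr\,\mathbf{Q}\left(\mathbf{X}\mathbf{X}^{\Htran}+\mathbf{S}+z\mathbf{I}\right)^{-1}$, which the paper states in the appendix of auxiliary results and imports without proof. What you have written is a proof plan for Theorem~\ref{thm:rmdl}, the downlink SINR approximation. Moreover, your plan explicitly invokes Theorem~\ref{Theorem 1} (and its squared-resolvent companion, Theorem~\ref{Theorem 2}) as a black box in step (iii) and throughout, so if it were read as an argument for Theorem~\ref{Theorem 1} itself it would be circular.

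A proof of Theorem~\ref{Theorem 1} requires different ingredients, none of which appear in your plan: (a) a concentration argument showing that the random trace is asymptotically close to a deterministic quantity, typically built from the rank-one perturbation and quadratic-form lemmas (Lemmas~\ref{Lemma 1}--\ref{Lemma 3}) together with a martingale-difference or Borel--Cantelli argument to upgrade convergence in probability to almost-sure convergence as $M\to\infty$; and (b) a proof that the fixed-point system defining $e_{1},\dots,e_{K}$ admits a unique solution and that the iteration $e_{k}(t)$ converges to it (e.g., via a monotonicity/standard-interference-function argument), so that the matrix $\mathbf{T}$ is well defined. As an outline for Theorem~\ref{thm:rmdl} your Sherman--Morrison peeling, trace-lemma, and deterministic-equivalent structure is broadly consistent with what the paper does in Appendix~\ref{app:rmdl}, but that is a different theorem from the one you were asked to prove.
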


\begin{theorem}\cite[Thm. 1]{wagner2012large}\label{Theorem 2}
Let be ${\bf \Theta}$ Hermitian nonnegative definite with uniformly bounded spectral norm (with respect to N). Under the conditions of Theorem \ref{Theorem 1},
\begin{align}
\begin{aligned}
     &{1\over M} {\rm tr} \mathbf{Q}\left(\mathbf{X}\mathbf{X}^{H} + \mathbf{S} + z\mathbf{I}_{M}\right)^{-1}\!\!\!{\bf \Theta}\!\left(\mathbf{X}\mathbf{X}^{H} + \mathbf{S} + z\mathbf{I}_{M}\right)^{-1} \\
     &-{1\over M} {\tr~\!} {\bf \mathbf{Q} T}^{\prime} \xrightarrow[M \rightarrow \infty]{\text{a.s.}} 0.
\end{aligned}
\end{align}
where
\begin{align}
    {\bf T}^{\prime}={\bf T}{\bf \Theta} {\bf T}+{\bf T}{1\over M}\sum_{k=1}^{K}{{\bf R}_{k}e_{k}^{\prime}\over (1+e_{k})^{2}}{\bf T},
\end{align}
where $e_k$ is the same as in Theorem~\ref{Theorem 1} and 
\begin{subequations}
\begin{gather}
    \ebf' = (\Ibf_K- \Jbf)^{-1} \vbf,\\
    v_k = \frac{1}{M} {\rm tr} {\bf R}_{k}{\bf T}{\bf \Theta} {\bf T}, \\
    J_{kk'} = \frac{1}{M^2(1+ e_{k'})^2} {\rm tr} {\bf R}_{k}{\bf T}{\bf R}_{k'}{\bf T}
\end{gather}
\end{subequations}
\end{theorem}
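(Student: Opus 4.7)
The plan is to decompose the downlink SINR \eqref{eq:dlSINRexp} into four building blocks -- the mean $\Exp\{\hk\herm \hat{\vbf}_k^d\}$ in the numerator, the cross interference $\Exp\{|\hk\herm \hat{\vbf}_{k'}^d|^2\}$ for $k'\neq k$, the signal variance $\mathbb{V}\{\hk\herm \hat{\vbf}_k^d\}$, and the per-AP normalization $\Exp\{|\hat{\vbf}_{mk}^d|^2\}$ that defines $\nu_k$ -- and then apply the deterministic-equivalent machinery of Theorems~\ref{Theorem 1}--\ref{Theorem 2} and Lemmas~\ref{Lemma 1}--\ref{Lemma 3} to each block separately. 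The RZF resolvent $\vect{\Omega}_{\rm RZF} = \sum_{k'}\hhkp\hhkp\herm + \rho\Ibf$ in \eqref{eq:dcomb} matches the $\Xbf\Xbf^\herms + z\Ibf$ form required by Theorem~\ref{Theorem 1}, with columns $\hhkp \sim \CN(\zero,\Ckpkp)$ block-diagonal by the preceding proposition, so $\Sbf$ of \eqref{eq:sbf} is the deterministic equivalent of $M\Nrf \cdot \vect{\Omega}_{\rm RZF}^{-1}$.

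First I would handle the numerator. Substituting $\hk = \hhk + \tilde{\hbf}_k$ and using MMSE orthogonality (so $\tilde{\hbf}_k$ is independent of the entire collection $\{\hhkp\}_{k'}$) kills the cross term and leaves $\Exp\{\hhk\herm \vect{\Omega}_{\rm RZF}^{-1}\hhk\}$. Peeling off the rank-one term $\hhk\hhk\herm$ via Lemma~\ref{Lemma 1} with $\vect{\Omega}_{{\rm RZF},k} = \vect{\Omega}_{\rm RZF} - \hhk\hhk\herm$ gives $\hhk\herm \vect{\Omega}_{\rm RZF}^{-1}\hhk = \hhk\herm\vect{\Omega}_{{\rm RZF},k}^{-1}\hhk/(1 + \hhk\herm \vect{\Omega}_{{\rm RZF},k}^{-1}\hhk)$, where the quadratic form is over a matrix independent of $\hhk$. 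Lemma~\ref{Lemma 2} replaces this quadratic by $\tfrac{1}{M\Nrf}\tr \Ckk \vect{\Omega}_{{\rm RZF},k}^{-1}$, and Theorem~\ref{Theorem 1} with $\mathbf{Q}=\Ckk$, combined with Lemma~\ref{Lemma 3} to absorb the rank-one perturbation, produces $e_k = \tfrac{1}{M\Nrf}\tr\Ckk\Sbf$. Hence $\Exp\{\hk\herm \hat{\vbf}_k^d\} \to e_k/(1+e_k) = \tr\Ckk\Sbf/(M\Nrf + \tr\Ckk\Sbf)$; dividing by $\nu_k$ and squaring gives $\alpha_k$.

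Next, for the cross-interference $\Exp\{|\hk\herm\hat{\vbf}_{k'}^d|^2\}/\nu_{k'}^2$, I would again split $\hk = \hhk + \tilde{\hbf}_k$ and handle three surviving contributions: (i) the pure-estimate term $\Exp\{|\hhk\herm \vect{\Omega}_{\rm RZF}^{-1}\hhkp|^2\}$, treated by applying Lemma~\ref{Lemma 1} twice to remove both $\hhk$ and $\hhkp$ from $\vect{\Omega}_{\rm RZF}$, then using \eqref{eq:x_A_y} to discard the purely cross quadratic, and then invoking Theorem~\ref{Theorem 2} with $\mathbf{\Theta}=\Ckpkp$ to produce $\Sbf'_{k'}$ of \eqref{eq:sbfp1}; (ii) the estimation-error piece $\Exp\{\tilde{\hbf}_k\herm \vect{\Omega}_{\rm RZF}^{-1}\hhkp\hhkp\herm\vect{\Omega}_{\rm RZF}^{-1}\tilde{\hbf}_k\}$, which by independence of $\tilde{\hbf}_k$ collapses to $\tr(\Rbf_k-\Ckk)\cdot\Exp\{\vect{\Omega}_{\rm RZF}^{-1}\hhkp\hhkp\herm\vect{\Omega}_{\rm RZF}^{-1}\}$ and combines with (i) to reconstruct the $\tr\Rbf_k\Sbf'_{k'}$ contribution; and (iii) a real-part cross term that produces the middle $\mathrm{Re}\{\cdot\}$ piece in $\beta_{kk'}$. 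The factor $1/(M\Nrf + \tr\Ckpkp\Sbf)^2$ out front comes from the $(1+\hhkp\herm\vect{\Omega}_{{\rm RZF},k'}^{-1}\hhkp)^2 \to (1+e_{k'})^2$ denominator after peeling $\hhkp$. An identical calculation with $\mathbf{\Theta}=\Ibf$ handles $P_k^d\mathbb{V}\{\hk\herm\hat{\vbf}_k^d\}/\nu_k^2$ and produces $\Sbf'$ of \eqref{eq:sbfp}; this term is $o(1)$ in the deterministic-equivalent limit and so is absorbed into the additive noise.

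Finally, the normalization $\nu_k^2 = \max_m \Exp\{|\hat{\vbf}_{mk}^d|^2\}$ factors into a maximum of $M$ quadratic forms $\hhk\herm \vect{\Omega}_{\rm RZF}^{-1}\Ebf_m\vect{\Omega}_{\rm RZF}^{-1}\hhk$, where $\Ebf_m$ is the $\Nrf$-dimensional block-selector for AP $m$. The same two-step argument (Lemma~\ref{Lemma 1} to isolate $\hhk$, then Theorem~\ref{Theorem 2} with $\mathbf{\Theta}=\Ebf_m$) produces the block-restricted trace $\tr\Ckk[m]\Sbf'[m]/(M\Nrf + \tr\Ckk\Sbf)^2$ stated in the theorem. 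I expect the main obstacle to be the cross-interference step: correctly tracking which remainder quadratic forms are $o(1)$ after two successive applications of Lemma~\ref{Lemma 1}, correctly pairing the surviving $(1+e_k)$ and $(1+e_{k'})$ denominators, and assembling the pieces (i)--(iii) into the $\beta_{kk'}$ formula without sign errors or mismatched normalizations. Combining the four deterministic equivalents then delivers \eqref{eq:apdlSINR}.
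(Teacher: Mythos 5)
Your proposal does not prove the statement it is attached to. The statement is Theorem~\ref{Theorem 2} itself---the deterministic equivalent of the bilinear resolvent trace $\frac{1}{M}\tr\,\mathbf{Q}(\mathbf{X}\mathbf{X}^{H}+\mathbf{S}+z\mathbf{I}_{M})^{-1}\mathbf{\Theta}(\mathbf{X}\mathbf{X}^{H}+\mathbf{S}+z\mathbf{I}_{M})^{-1}$---which the paper quotes from \cite[Thm.~1]{wagner2012large} without proof. What you have written is instead a sketch of the proof of Theorem~\ref{thm:rmdl} (the downlink SINR approximation of Appendix~\ref{app:rmdl}), and that sketch \emph{invokes} Theorem~\ref{Theorem 2} as a black box several times (``Theorem~\ref{Theorem 2} with $\mathbf{\Theta}=\Ckpkp$'', ``Theorem~\ref{Theorem 2} with $\mathbf{\Theta}=\Ebf_m$''). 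Relative to the statement you were asked to establish, the argument is therefore circular: the result is assumed, not derived.

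A proof of the actual statement has to operate on the resolvent itself rather than on the SINR. The standard route is to view the bilinear form as the derivative at $t=0$ of $\frac{1}{M}\tr\,\mathbf{Q}\left(\mathbf{X}\mathbf{X}^{H}+\mathbf{S}+z\mathbf{I}_{M}-t\mathbf{\Theta}\right)^{-1}$, apply Theorem~\ref{Theorem 1} to this perturbed family to get a $t$-dependent deterministic equivalent $\mathbf{T}(t)$ with fixed-point quantities $e_{k}(t)$, and then differentiate the coupled system $e_{k}=\frac{1}{M}\tr\,\mathbf{R}_{k}\mathbf{T}$ in $t$; that implicit differentiation is exactly what produces the correction term $\mathbf{T}\frac{1}{M}\sum_{k}\frac{\mathbf{R}_{k}e_{k}'}{(1+e_{k})^{2}}\mathbf{T}$ and the linear system $\ebf'=(\Ibf_{K}-\Jbf)^{-1}\vbf$ with the stated $v_{k}$ and $J_{kk'}$. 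None of these ingredients appears in your proposal. (The paper itself also gives no proof here---it is a citation---but the content of your write-up belongs to the proof of Theorem~\ref{thm:rmdl}, where it is broadly consistent with what the paper does in Appendix~\ref{app:rmdl}, not to this statement.)
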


\section{ Proof of Thm.~\ref{thm:rmup}}
\label{app:rmup}

For our approximations, we consider that the elements of matrix $\Ckkp$ for $k\neq k'$ are close to zero. Note that this is not generally true. Two possible cases where this is correct is when orthogonal pilots or random pilots with large length are used. However, through numerical evaluations, we observe that even for small pilot length this value is small and does not impact the approximation. 

\textbf{First Approximation:}
Consider the SINR expression in \eqref{eq:upSINRexp}, using Woodbury inversion lemma, we have 

\begin{align}
\label{eq:fa1}
\begin{aligned}
    &\SINRup_k =  \Pup_k \hhk\herm \Dbf^{-1} \hhk - \\
    &\Pup_k \hhk\herm \Dbf^{-1} \hat{\Hbf}_{/k}  \left( \Ibf_K + \hat{\Hbf}_{/k} \herm \Dbf^{-1} \hat{\Hbf}_{/k}  \right)^{-1} \hat{\Hbf}_{/k} \herm \Dbf^{-1} \hhk,
    \end{aligned}
\end{align}
where $\hat{\Hbf}  = [\sqrt{\Pup_1} \hat{\hbf}_{1}, \sqrt{\Pup_2}\hat{\hbf}_{2},\ldots,\sqrt{\Pup_K}\hat{\hbf}_{K}]$ and $\hat{\Hbf}_{/k}$ means that the column $k$ is removed. Let us consider the term
\begin{align}
     &\begin{aligned}
    &\hat{\Hbf}\herm \Dbf^{-1} \hat{\Hbf}= \\
    &\begin{bmatrix}\sqrt{\Pup_1}\hat{\hbf}_{1}\herm \\ \sqrt{\Pup_2}\hat{\hbf}_{2}\herm\\ \vdots\\ \sqrt{\Pup_K}\hat{\hbf}_{K}\herm \end{bmatrix} \Dbf^{-1} \begin{bmatrix} \sqrt{\Pup_1}\hat{\hbf}_{1}, \sqrt{\Pup_2}\hat{\hbf}_{2},\ldots,\sqrt{\Pup_K}\hat{\hbf}_{K}\end{bmatrix}
     \end{aligned}\\
     &\Rightarrow \left[ \hat{\Hbf}\herm \Dbf^{-1} \hat{\Hbf}\right]_{k,k'} =\sqrt{\Pup_{k}\Pup_{k'}}\, \hhk\herm \Dbf^{-1} \hhkp.
\end{align}
For $k = k'$, we have 
\begin{align}
\label{eq:fa2}
\begin{aligned}
    \hhk\herm \Dbf^{-1} \hhk &\xrightarrow[\text{ Distribution}]{\text{In}} \xbf \herm \Ckk^{0.5}\Dbf^{-1} \Ckk^{0.5} \xbf, \quad \xbf \sim \CN\left(\zero, \Ibf_{M\Nrf}\right) \\
    &\xrightarrow[]{\text{Lemma \ref{Lemma 2}}} \tr~\! \Ckk^{0.5}\Dbf^{-1} \Ckk^{0.5} = \tr~\! \Ckk\Dbf^{-1}.
    \end{aligned}
\end{align}
For $k \neq k'$, note that we can write
\begin{gather}
\begin{bmatrix}
    \hhk\\
    \hhkp 
\end{bmatrix} = \begin{bmatrix}
    \Abf_k, &\Bbf_k\\
    \Abf_{k'}, &\Bbf_{k'}
\end{bmatrix} \begin{bmatrix}
    \xbf\\
    \ybf 
\end{bmatrix},  \quad \xbf,\ybf \sim \CN\left(\zero, \Ibf_{M\Nrf}\right) \\
\Rightarrow  \begin{bmatrix}
    \Abf_k, &\Bbf_k\\
    \Abf_{k'}, &\Bbf_{k'}
\end{bmatrix}  =  \begin{bmatrix}
    \Ckk, &\Ckkp\\
    \Ckpk, &\Ckpkp
\end{bmatrix}^{0.5}
\end{gather}
we have 
\begin{align}
\label{eq:fa22}
    \hhk\herm \Dbf^{-1} \hhkp &\xrightarrow[\text{ Distribution}]{\text{In}} \left(\Abf_k\xbf\! +\! \Bbf_k\ybf\right) \!\herm \Dbf^{-1}\!\! \left(\Abf_{k'}\xbf\! +\! \Bbf_{k'}\ybf\right) \notag\\
    &\xrightarrow[]{\text{Lemma \ref{Lemma 2}}} \tr~\! \left(\Abf_{k} \Abf_{k'}\herm + \Bbf_{k} \Bbf_{k'}\herm\right) \Dbf^{-1}\notag\\
    &=  \tr~\! \Ckkp\Dbf^{-1} \approx 0,
\end{align}
Based on \eqref{eq:fa2} and \eqref{eq:fa22}, we have 
\begin{align}
     \left( \Ibf_K +\hat{\Hbf}_{/k}  \herm \Dbf^{-1} \hat{\Hbf}_{/k}  \right)^{-1}  \approx \diag\left\{ \frac{1}{1+\Pup_{k'}\tr~\! \Ckpkp\Dbf^{-1} }\right\}_{k'\neq k}.
\end{align}
Substituting this in \eqref{eq:fa1}, we get 
\begin{align}
\label{eq:fa3}
    \SINRup_k =  \Pup_k \hhk\herm \Dbf^{-1} \hhk - \Pup_k \sum_{k'\neq k} \frac{ \Pup_{k'} \left| \hhk\herm \Dbf^{-1} \hhkp \right|^2}{1+\Pup_{k'} \tr~\! \Ckpkp\Dbf^{-1}}.
\end{align}
Based on \eqref{eq:fa2}, already have an approximation for the first term. Next, we approximate the nominator. 
\begin{align}
\label{eq:fa4}
&\left| \hhk\herm \Dbf^{-1} \hhkp \right|^2 =  \hhk\herm \Dbf^{-1} \hhkp \hhkp \herm \Dbf^{-1} \hhk \notag\\
    &\xrightarrow[\text{ Distribution}]{\text{In}} \xbf \herm \Ckk^{0.5}\Dbf^{-1} \hhkp \hhkp \herm \Dbf^{-1} \Ckk^{0.5}\xbf,
    \notag\\
     &\xrightarrow[]{\text{Lemma \ref{Lemma 2}}} \tr~\! \Ckk^{0.5} \Dbf^{-1}   \hhkp \hhkp \herm \Dbf^{-1} \Ckk^{0.5} = \tr~\!  \hhkp \herm  \Dbf^{-1} \Ckk \Dbf^{-1} \hhkp\notag\\
     &\xrightarrow[\text{ Distribution}]{\text{In}} \ybf \herm \Ckpkp^{0.5}\Dbf^{-1} \Ckk\herm \Dbf^{-1} \Ckpkp^{0.5} \ybf , 
     \notag\\
      &\xrightarrow[]{\text{Lemma \ref{Lemma 2}}} \tr~\!  \Ckpkp \Dbf^{-1} \Ckk \Dbf^{-1}, 
\end{align}
where $\xbf$ and $\ybf$ are i.i.d. $\CN\left(\zero, \Ibf_{M\Nrf}\right)$. Substituting the \eqref{eq:fa4} and \eqref{eq:fa2} in \eqref{eq:fa3}, completes the proof.

\textbf{Second Approximation:}
Consider the SINR expression in \eqref{eq:upSINRexp}. 
\begin{align}
    \label{eq:hkohk}
     \Pup_k \hhk\herm \vect{\Omega}_k^{-1}\hhk &\xrightarrow[\text{ Distribution}]{\text{In}} \Pup_k \xbf \herm \Ckk^{0.5}\vect{\Omega}_k^{-1}\Ckk^{0.5} \xbf,
    \notag\\
     &\xrightarrow[]{\text{Lemma \ref{Lemma 2}}} \Pup_k \tr~\! \Ckk^{0.5}\vect{\Omega}_k^{-1} \Ckk^{0.5} = \tr~\! \Ckk\vect{\Omega}_k^{-1}\notag\\
     &\xrightarrow[]{\text{Lemma \ref{Lemma 3}}} \Pup_k \tr~\! \Ckk\vect{\Omega}^{-1}\notag\\
     &\xrightarrow[]{\text{Theorem \ref{Theorem 1}}} \Pup_k \frac{1}{M\Nrf} \tr~\! \Ckk\Tbf,
\end{align}
where $\xbf \sim \CN\left(\zero, \Ibf_{M\Nrf}\right)$, $\vect{\Omega}$, $\vect{\Omega}_k$, and $\Tbf$ are as in \eqref{eq:SINRvec}, \eqref{eq:upSINRexp} and \eqref{eq:Tval}, respectively. Note that to use Theorem \ref{Theorem 1} here, we again assume that the matrix $\Ckkp$ is close to zero.

\section{ Proof of Thm.~\ref{thm:pOptup}}
\label{app:pOptup}

For our proof we need the following definitions

\begin{Definition}[\textbf{Competitive Utility Functions} {\cite[Assumption~1]{hong2014unified}}]
Function ${\rm u}(\vect{p}): \mathbb{R}^K \mapsto \mathbb{R}^K$ is a competitive utility function if
\begin{enumerate}
    \item Positivity: For all $k$, ${\rm u}(\vect{p}) > \zero$ if $\vect{p}> \zero$, and ${\rm u}_k(\vect{p}) = 0$ if and only if $p_k = 0$ \footnote{All the inequalities are meant for elementwise}.
    \item Competitiveness: For all $k, \in[K]$, ${\rm u}_k$ is strictly increasing with respect to $p_k$ and is strictly decreasing with respect to $p_{k'},~ k'\neq k$ when $p_k > 0$.
    \item Directional Monotonicity: For $\lambda > 1$ and $\pbf >0 $, ${\rm u}(\lambda \pbf) > {\rm u}(\pbf)$.
\end{enumerate}
\end{Definition}

\begin{Definition}[\textbf{Monotonic Constraints} {\cite[Assumption~2]{hong2014unified}}]
Function ${\rm c}(\vect{p}): \mathbb{R}^K \mapsto \mathbb{R}^K$ is a monotonic constraints function if
\begin{enumerate}
    \item Strict Monotonicity: For all $k$, ${\rm c}_k(\pbf_1) > {\rm c}_k(\pbf_2)$ if $\pbf_1>\pbf_2$, and ${\rm u}_k(\pbf_1) \geq {\rm u}_k(\pbf_2)$ if $\pbf_1 \geq \pbf_2$.
    \item Feasibility: The set $\{\pbf > \zero : {\rm c}(\pbf) \leq  \bar{\cbf} \}$ is non-empty.
    \item Validity: For any $\pbf > \zero$, there exists $\lambda > 0$ such that
${\rm c}_k(\lambda \pbf) \geq \bar{c}_k$, for some $k$.
\end{enumerate}
\end{Definition}
It is easy to show that the users power constraints satisfy the definition of monotonic constraints. Here, we prove that the SINR expressions satisfy the competitive utility functions definition. 

Let us consider the SINR expression in \eqref{eq:upSINRexp}. 

\textbf{First condition:} Positivity is satisfied as $\vect{\Omega}_k$ is summation of positive definite matrices and so $\vect{\Omega}_k^{-1}$ is also positive definite. Therefore, $\SINRup_k >0$ unless $\Pup_k = 0$. 

\textbf{Second condition:} To prove this condition, we show that$\frac{\partial \SINRup_k}{\partial \Pup_k} >0$ and $\frac{\partial \SINRup_k}{\partial \Pup_{k'}} <0,~ k' \neq k$.  For $\frac{\partial \SINRup_k}{\partial \Pup_k} >0$,
\begin{align}
    &\frac{\partial \SINRup_k}{\partial \Pup_k} = \frac{\partial}{\partial \Pup_k} \hhk\herm \Pup_k\vect{\Omega}_k^{-1} \hhk \\
    &\stackrel{(a)}{=} \hhk\herm \frac{\partial \Gbf_k}{\partial \Pup_k} \hhk\\
    &\stackrel{(b)}{=} -\hhk\herm \Gbf_k \frac{\partial \Gbf_k^{-1}}{\partial \Pup_k} \Gbf_K \hhk,
\end{align}
where 
\begin{align}
    &\Gbf_k = \left(\!\sum_{k'\neq k}\!\! \frac{\Pup_{k'}}{\Pup_{k}}\!\! \left(\hhkp\hhkp\herm \!\! +\!\! \Rbf_{k'}\!\! -\!\! \Cbf_{{k'}}\!\!\right)\!\!+ \!\!\Rbf_{k}\!\!-\!\! \Cbf_{{k}}\!\!+\!\! \frac{1}{\Pup_k}\Cbf_{z}\!\!\right)^{-1}\!\!\!\!\!\!\!\!,\\
    &\frac{\partial \Gbf_k^{-1}}{\partial \Pup_k}\!\! =\! - \left(\!\sum_{k'\neq k} \!\!\frac{\Pup_{k'}}{{\Pup_{k}}^2} \!\!\left(\hhkp\hhkp\herm \!\! +\!\! \Rbf_{k'}\!\! - \!\!\Cbf_{{k'}}\!\! \right)\!\!+ \!\!\frac{1}{{\Pup_k}^2}\Cbf_z\!\!\right),
\end{align}
(a) follows from substituting the value of $\Dbf$ from \eqref{eq:Dval} in the expression of $\vect{\Omega}_k$ form \eqref{eq:upSINRexp} and (b) follows for the equality $\frac{\partial \Abf^{-1}}{\partial x} = -\Abf^{-1}\frac{\partial \Abf}{\partial x}\Abf^{-1}$. Note that based on the above expression, $\frac{\partial \Gbf_k^{-1}}{\partial \Pup_k}$ is negative definite and $\Gbf_k$ is positive definite. Define $\ybf =\Gbf_K \hhk $, then $\frac{\partial \SINRup_k}{\partial \Pup_k} = - \ybf\herm \frac{\partial \Gbf_k^{-1}}{\partial \Pup_k} \ybf  >0$. Next, we show $\frac{\partial \SINRup_k}{\partial \Pup_{k'}} <0,~ k' \neq k$. 
\begin{align}
    \frac{\partial \SINRup_k}{\partial \Pup_{k'}} &= \Pup_k\hhk \herm \frac{\partial \vect{\Omega}_k}{\partial \Pup_{k'}}\hhk\notag\\
    &\stackrel{(b)}{=} -\Pup_k \hhk \herm \vect{\Omega}_k\left( \hhkp\hhkp\herm +  \Rbf_{k} - \Cbf_{{k}} \right) \vect{\Omega}_k\hhk. 
\end{align}
Note that $\vect{\Omega}_k$ and $\left( \hhkp\hhkp\herm +  \Rbf_{k} - \Cbf_{{k}} \right)$ are positive definite. Define $\ybf =\vect{\Omega}_k\hhk$, then, $\frac{\partial \SINRup_k}{\partial \Pup_{k'}} = -\Pup_k \ybf \herm \left( \hhkp\hhkp\herm +  \Rbf_{k} - \Cbf_{{k}} \right) \ybf< 0$.

\textbf{Third condition:} To prove this condition, let us consider the difference
\begin{align}
    &\SINRup_k(\lambda\mathbf{\Pup}) - \SINRup_k(\mathbf{\Pup}) =\notag\\
    &\Pup_k\hhk\herm \left(\left(\frac{\vect{\Omega}_k(\lambda\mathbf{\Pup})}{\lambda}\right)^{-1}  - \vect{\Omega}_k(\mathbf{\Pup})^{-1}\right) \hhk
\end{align}
using Woodbury inversion lemma we have
\begin{align}
    &\left(\frac{\vect{\Omega}_k(\lambda\mathbf{\Pup})}{\lambda}\right)^{-1}  - \vect{\Omega}_k(\mathbf{\Pup})^{-1} =\notag\\
    &\frac{\vect{\Omega}_k(\lambda\mathbf{\Pup})}{\lambda} - \frac{\vect{\Omega}_k(\lambda\mathbf{\Pup})}{\lambda}\left(\frac{\vect{\Omega}_k(\lambda\mathbf{\Pup})}{\lambda} -  \vect{\Omega}_k(\mathbf{\Pup})\right)\frac{\vect{\Omega}_k(\lambda\mathbf{\Pup})}{\lambda}
    \notag\\
    &= \frac{\vect{\Omega}_k(\lambda\mathbf{\Pup})}{\lambda} + \frac{\vect{\Omega}_k(\lambda\mathbf{\Pup})}{\lambda}\left(  \vect{\Omega}_k(\mathbf{\Pup}) -
    \frac{\vect{\Omega}_k(\lambda\mathbf{\Pup})}{\lambda}
    \right)\frac{\vect{\Omega}_k(\lambda\mathbf{\Pup})}{\lambda}
\end{align}
It is easy to check that both $\frac{\vect{\Omega}_k(\lambda\mathbf{\Pup})}{\lambda}$ and $\vect{\Omega}_k(\mathbf{\Pup}) - \frac{\vect{\Omega}_k(\lambda\mathbf{\Pup})}{\lambda}$ are positive definite matrices. Let us denote them by matrices $\Abf$ and $\Bbf$ respectively. We have 
\begin{align}
    \SINRup_k(\lambda\mathbf{\Pup}) - \SINRup_k(\mathbf{\Pup}) =\notag \\
    \Pup_k \hhk \herm \Abf \hhk + \Pup_k \hhk \herm \Abf \Bbf \Abf\hhk. 
\end{align}
Similar steps as previous conditions, we can show that both terms are positive and
\begin{align}
     \SINRup_k(\lambda\mathbf{\Pup}) - \SINRup_k(\mathbf{\Pup}) > 0.
\end{align}
This concludes the proof for SINR expression in \eqref{eq:upSINRexp}.

To prove that \eqref{eq:ur2} is also a competitive utility function, a similar set of steps as of the proof for \eqref{eq:upSINRexp} can be used plus the well known results that for $\Abf$ positive definite $\tr~\! \Abf \Bbf >0$ when $\Bbf$ is positive definite and $\tr~\! \Abf \Bbf <0$ when $\Bbf$ is negative definite.

\section{ Proof of Thm.~\ref{thm:rmdl}}
\label{app:rmdl}
To prove this theorem, we use a similar procedure as of the one used for \cite[Theorem.~5]{6415388}.
Consider the SINR expression in \eqref{eq:SINRvec}. We perfom approximation of each term separately. First we approximate  $\Exp\{|\hat{\vbf}_{mk}^d|^2\}$ as. 

\begin{align}
\label{eq:vkd}
    &|\hat{\vbf}_{k}|^2 = \hhk\herm \left(\sum_{k'} \hhkp\hhkp\herm \right)^{-2}\hhk, \notag\\
     &\xrightarrow[]{\text{Lemma \ref{Lemma 1}}}
    \frac{\hhk\herm \left(\sum_{k'\neq k} \hhkp\hhkp\herm \right)^{-2}\hhk}{\left(1+\hhk
    \herm\left(\sum_{k'\neq k} \hhkp\hhkp\herm  \right)^{-1}\hhk\right)^2}
\end{align}
using Lemma \ref{Lemma 2}, Lemma \ref{Lemma 3}, and Theorem \ref{Theorem 1} for the nominator and Lemma \ref{Lemma 2}, Lemma \ref{Lemma 3}, and Theorem \ref{Theorem 2} for the denominator, we have 
\begin{align}
    &|\hat{\vbf}_{k}|^2 \approx \frac{ \frac{1}{M^2\Nrf^2}\tr~\! \Ckk\Sbf'}{\left(1 + \frac{1}{M\Nrf}\tr~\! \Ckk\Sbf\right)^2},
\end{align}
where $\Sbf$ and $\Sbf'$ are as in \eqref{eq:sbf} and  \eqref{eq:sbfp}, respectively. Note that, the matrices $\Sbf'$ and $\Ckk$ are block diagonal with $M$ block matrices of size $N_{RF}\times N_{RF}$ each only depending to the large scale statistics corresponding to one of the APs. Moreover, we have $|\vbf_{k}|^2 = \sum_{m\in[M]}|\vbf_{mk}^d|^2$. Rewriting the left and right hand side of \eqref{eq:vkd}, we have
\begin{align}
\label{eq:vkd2}
    \sum_{m\in[M]}|\hat{\vbf}_{mk}^d|^2 \approx \sum_{m\in[M]} \frac{ \frac{1}{M^2\Nrf^2}  \tr~\! \Ckk[m]\Sbf'[m]}{\left(1 + \frac{1}{M\Nrf}\tr~\! \Ckk\Sbf\right)^2},
\end{align}
where $\Ckk[m]$ and $\Sbf'[m]$ are the $m^{\rm th}$ block matrices of size $N_{RF}\times N_{RF}$ located on the diagonal of $\Ckk$ and $\Tbf'$, respectively. Based on the one-to-one correspondence observed between the left and right hand side of \eqref{eq:vkd2}, we approximate the $|\hat{\vbf}_{mk}^d|^2$ as
\begin{align}
    |\hat{\vbf}_{mk}^d|^2\approx \frac{ \frac{1}{M^2\Nrf^2}  \tr~\! \Ckk[m]\Sbf'[m]}{\left(1 + \frac{1}{M\Nrf}\tr~\! \Ckk\Sbf\right)^2},
\end{align}
Let us define $\nu_{k} = \max_m \sqrt{|\hat{\vbf}_{mk}^d|^2}$. Next, we approximate $ | \Exp \{\hbf_k\herm \vbf_k^d\} |^2 $.
\begin{align}
\label{eq:n1}
    \vbf_k^{d\herms}\hbf_k &= \frac{1}{\nu_k}\hhk \herm \left(\sum_{k'}\hhkp\hhkp\herm  + \rho\Ibf \right)^{-1} \hdek\notag\\ &\xrightarrow[]{\text{Lemma \ref{Lemma 1}}} \frac{1}{\nu_k} \frac{\hhk \herm \vect{\Omega}_{\rm RZF,k}^{-1}\hdek}{1+  \hhk\herm \vect{\Omega}_{\rm RZF,k}^{-1} \hhk},
\end{align}
where $\vect{\Omega}_{\rm RZF,k}$ is as in \eqref{eq:dcomb} where from the summation we remove the element $k' = k$. Using Lemma~\ref{Lemma 2}, Lemma~\ref{Lemma 3}, and Thm.~\ref{Theorem 1}, we have
\begin{align}
\label{eq:n2}
    \hhk\herm \vect{\Omega}_{\rm RZF,k}^{-1} \hhk \approx \frac{1}{M\Nrf} \tr~\! \Ckk\Sbf,
\end{align}
Substituting \eqref{eq:n2} in \eqref{eq:n1}, we get 
\begin{align}
\label{eq:ntot}
        |  \Exp \{\hk \herm \vbf_k^d \} |^2  \approx \frac{1}{\nu_k^2} \frac{(\frac{1}{M\Nrf}\tr~\! \Ckk \Sbf )^2}{\left(1+  \frac{1}{M\Nrf}\tr~\! \Ckk\Sbf\right)^2}  
\end{align}
Next, we approximate the term $\sum_{k' = 1}^K{\mathbb{E}}\left\{| \hbf_k\herm  \vbf_{k'}^d |^2\right\}$.
\begin{align}
\label{eq:d1}
    \left|\vbf_{k'}^{d\herms}\hbf_k\right|^2&\xrightarrow[]{\text{Lemma \ref{Lemma 1}}}\frac{1}{\nu_{k'}^2} \left|\frac{\hhkp \herm\vect{\Omega}_{\rm RZF,k'}^{-1}\hbf_k}{1+  \hhkp\herm \vect{\Omega}_{\rm RZF,k'}^{-1} \hhkp}\right|^2.
\end{align}
We already have an approximation for  $ \hhk\herm \vect{\Omega}_{\rm RZF,k'}^{-1} \hhk$ from \eqref{eq:n2}. As for the nominator, we are in fact interested in its absolute value to power two.
\begin{align}
    \label{eq:d2}
    &\left|\hhkp \herm \vect{\Omega}_{\rm RZF,k'}^{-1}\hbf_k\right|^2 = \left|\hhkp \herm \vect{\Omega}_{\rm RZF,k'}^{-1}\hbf_k \hbf_k\herm\vect{\Omega}_{\rm RZF,k'}^{-1} \hhkp \right|\notag\\
    &\xrightarrow[]{\text{Lemma \ref{Lemma 2}}} \frac{1}{M \Nrf}\left|\hbf_k\herm\vect{\Omega}_{\rm RZF,k'}^{-1}\Ckpkp \vect{\Omega}_{\rm RZF,k'}^{-1}\hbf_k \right|.
\end{align}
Substituting
\begin{align}
    \vect{\Omega}_{\rm RZF,k'}^{-1} = \vect{\Omega}_{\rm RZF,k,k'}^{-1} - \frac{\vect{\Omega}_{\rm RZF,k,k'}^{-1}\hhk \hhk\herm \vect{\Omega}_{\rm RZF,k,k'}^{-1}}{1+ \hhk \vect{\Omega}_{\rm RZF,k,k'}^{-1}\hhk},
\end{align}
where $\vect{\Omega}_{\rm RZF,k,k'} = \sum_{k'' \neq k,k'} \hhkpp\hhkpp\herm  + \rho\Ibf$, we have 
\begin{align}
    \begin{aligned}
\hk\herm &\vect{\Omega}_{\rm RZF,k'}^{-1} \Ckpkp \vect{\Omega}_{\rm RZF,k'}^{-1}\hk =  \hk\herm \vect{\Omega}_{\rm RZF,k,k'}^{-1} \Ckpkp \vect{\Omega}_{\rm RZF,k,k'}^{-1}\hk \notag\\
&+ \frac{\left|\hk\herm \vect{\Omega}_{\rm RZF,k,k'}^{-1} \hhk\right|^2\hhk \herm \vect{\Omega}_{\rm RZF,k,k'}^{-1}\Ckpkp \vect{\Omega}_{\rm RZF,k,k'}^{-1} \hhk }{(1+ \hhk \vect{\Omega}_{\rm RZF,k,k'}^{-1}\hhk)^2}\\
&- 2{\rm Re}\left\{\frac{\hhk \herm  \vect{\Omega}_{\rm RZF,k,k'}^{-1} \hk \hk \herm \vect{\Omega}_{\rm RZF,k,k'}^{-1}\Ckpkp \vect{\Omega}_{\rm RZF,k,k'}^{-1} \hhk}{1+ \hhk \vect{\Omega}_{\rm RZF,k,k'}^{-1}\hhk}\right\}.
    \end{aligned}
\end{align}
Applying Lemma~\ref{Lemma 2}, Lemma~\ref{Lemma 3}, and Thm.~\ref{Theorem 2}, we have
\begin{subequations}
\label{eq:d3}
\begin{align}
    &\hk\herm\vect{\Omega}_{\rm RZF,k,k'}^{-1} \Ckpkp \vect{\Omega}_{\rm RZF,k,k'}^{-1}\hk \approx \frac{1}{M^2\Nrf^2}\tr~\! \Rbf_k \Sbf'_{k'}\\
    &\hhk \herm \vect{\Omega}_{\rm RZF,k,k'}^{-1}\Ckpkp\vect{\Omega}_{\rm RZF,k,k'}^{-1} \hhk \approx \frac{1}{M^2\Nrf^2} \tr~\! \Ckk \Sbf'_{k'}\\
    &\hk \herm \vect{\Omega}_{\rm RZF,k,k'}^{-1}\Ckpkp \vect{\Omega}_{\rm RZF,k,k'}^{-1} \hhk \approx \frac{1}{M^2\Nrf^2} \tr~\! \Ckk \Sbf'_{k'}
\end{align}
\end{subequations}
where $ \mathbf{S}'_{k'} $ is as in \eqref{eq:sbfp1}.
Similarly, using Lemma~\ref{Lemma 2}, Lemma~\ref{Lemma 3}, and Thm.~\ref{Theorem 1}, we have
\begin{subequations}
\label{eq:d4}
\begin{align}
    \hhk \herm  \vect{\Omega}_{\rm RZF,k,k'}^{-1} \hk \approx \frac{1}{M\Nrf}\tr~\! \Ckk\Sbf\\
    \hhk \herm  \vect{\Omega}_{\rm RZF,k,k'}^{-1} \hhk \approx \frac{1}{M\Nrf}\tr~\! \Ckk\Sbf.
\end{align}
\end{subequations}
Substituting \eqref{eq:d4} and \eqref{eq:d3} in \eqref{eq:d2} and the result in \eqref{eq:d1}, we have
\begin{align}
    \label{eq:dtot}
    \begin{aligned}
    &{\mathbb{E}}\left\{| \hk \herm \vbf^d_{k'} |^2\right\} \approx \\
    &\frac{1}{\nu_{k'}^2(M\Nrf + \tr~\! \Ckpkp \Sbf)^2}\times\Big[ \tr~\!\Rbf_k\Sbf'_{k'} \\
    &- 2{\rm Re}\left\{\frac{\tr~\!\Ckk \Sbf\, \tr~\! \Ckk\Sbf'_{k'}}{M\Nrf + \tr~\! \Ckk \Sbf}\right\}\\
    &+(M\Nrf + \tr~\! \Ckk \Sbf)^2(\tr~\! \Ckk \Sbf )^2 \,\tr~\! \Ckk \Sbf'_{k'}\Big ]
    \end{aligned}
\end{align}
Also, we approximate $ {\mathbb{V}}\left\{ \hk\herm \vbf^d_k \right\} \approx 0$. Therefore, using \eqref{eq:ntot} and \eqref{eq:dtot} in \eqref{eq:dlSINRexp}, completes our derivation.
\end{document}